\documentclass{llncs}

\newcommand\forFull[1]{#1}
\newcommand\forAbs[1]{}

\newcommand*\samethanks[1][\value{footnote}]{\footnotemark[#1]}

\title{Improved Approximation Algorithms \\for (Budgeted) Node-weighted Steiner Problems}
\author{MohammadHossein Bateni\inst{1}
\and
MohammadTaghi Hajiaghayi\inst{2} \thanks{Supported in part by NSF CAREER award 1053605, NSF grant CCF-1161626,
      ONR YIP award N000141110662, DARPA/AFOSR grant FA9550-12-1-0423,
      and a University of Maryland Research and Scholarship Award (RASA).}
\and
Vahid Liaghat\inst{2} \samethanks
}
\institute{
  Google Research, 76 Ninth Avenue, New York, NY 10011 
\and
  Computer Science Department,
  Univ of Maryland,
  A.V.W. Bldg.,
  College Park, MD 20742
}

\sloppy

\let\proof\undefined

\usepackage{amsfonts,url}
\urlstyle{rm}
\usepackage{latexsym,graphicx,dsfont}
\usepackage{amsmath,amssymb,enumerate,amsthm}
\usepackage{xspace, times}
\usepackage{bm}
\usepackage{ifpdf}
\usepackage{shadow,shadethm,color}
\usepackage[noend]{algorithmic}
\usepackage{algorithm}
\usepackage{subfigure}
\usepackage{verbatim}

\usepackage
  [pdfpagelabels,plainpages=false,breaklinks,bookmarks,bookmarksnumbered,bookmarksopen,bookmarksopenlevel=2]
  {hyperref}
{\makeatletter \hypersetup{pdftitle={\@title}}}

\pagenumbering{arabic}

\usepackage[usenames,dvipsnames]{xcolor}
\usepackage{framed}
\usepackage{tikz}
\usetikzlibrary{matrix,decorations.pathreplacing}

\newif\ifdraft
\draftfalse
\drafttrue 

\ifdraft
\newcounter{Notecount}[section] \newcommand{\theNote}{\thesection.\arabic{Notecount}}
\newcommand{\mhnote}[1]{\refstepcounter{Notecount}\textcolor{Plum}{%
    \mathversion{bold}\marginpar{\hfill\tiny\sffamily\bfseries
      \textcolor{Plum}{MH~\theNote}}$\ll$\bfseries\sffamily#1
    --Mohammad$\gg$}\mathversion{normal}}
\newcommand{\vlnote}[1]{\refstepcounter{Notecount}\textcolor{Red}{%
    \mathversion{bold}\marginpar{\hfill\tiny\sffamily\bfseries
      \textcolor{Red}{VL~\theNote}}$\ll$\bfseries\sffamily#1
    --Vahid$\gg$}\mathversion{normal}}
\reversemarginpar
\else
\newcommand{\vlnote}[1]{}
\newcommand{\mhnote}[1]{}
\fi

\let\realbfseries=\bfseries
\def\bfseries{\realbfseries\boldmath}



\newtheorem{fact}{Fact}

\spnewtheorem*{maintheorem}{Theorem}{\bfseries}{\itshape}

\newcommand{\Mb}[1]{\mathbf{#1}}
\newcommand{\Mc}[1]{\mathcal{#1}}

\newcommand{\y}[0]{\Mb{y}}

\newcommand{\x}[0]{\Mb{x}}
\newcommand{\z}[0]{\Mb{z}}
\newcommand{\f}[0]{\Mb{f}}

\newcommand{\OPT}[0]{\textrm{OPT}}

\newcommand{\Tau}[0]{\Mc{T}}
\newcommand{\eps}[0]{\epsilon}

\newcommand{\core}[0]{\mathbf{core}}
\newcommand{\pcsf}[0]{\mathbf{PCSF}}

\newcommand{\ignore}[1]{}


\def\caps#1{\textsc{#1}}
\def\MST{\caps{MST}}
\def\ST{\caps{Steiner Tree}}

\newcommand{\todo}[1]{}

\usepackage{hyperref}
\hypersetup{
    colorlinks,
    citecolor=blue,
    filecolor=black,
    linkcolor=magenta,
    urlcolor=black
}

\pagestyle{plain}

\begin{document}
\maketitle

\begin{abstract}
Moss and Rabani~\cite{MossRabani} study constrained node-weighted Steiner tree problems
with two independent weight values associated with each node, namely, cost and prize (or penalty).
They give an $O(\log n)$-approximation algorithm for the prize-collecting node-weighted Steiner
tree problem (PCST)---where the goal is to minimize the cost of a tree plus the penalty of vertices
not covered by the tree. They use the algorithm for PCST to obtain a bicriteria
$(2, O(\log n))$-approximation algorithm for the Budgeted node-weighted Steiner tree
problem---where the goal is to maximize the prize of a tree with a given budget for its cost.
Their solution may cost up to twice the budget, but collects a factor
$\Omega(\frac{1}{\log n})$ of the optimal prize. We improve these results 
from at least two aspects.

Our first main result is a primal-dual $O(\log h)$-approximation algorithm for a more general
problem, prize-collecting node-weighted Steiner forest (PCSF), where we have $h$ demands each
requesting the connectivity of a pair of vertices.
Our algorithm can be seen as a greedy algorithm which reduces the number of demands by choosing a structure with minimum cost-to-reduction ratio.
This natural style of argument (also used by Klein and Ravi~\cite{KleinR95} and Guha et al.~\cite{Guha99}) leads to a much simpler algorithm than that of
Moss and Rabani~\cite{MossRabani} for PCST.


Our second main contribution is for the Budgeted node-weighted Steiner tree problem, which is
also an improvement to Moss and Rabani~\cite{MossRabani} and Guha et al.~\cite{Guha99}.
In the unrooted case, we improve upon an $O(\log^2 n)$-approximation of \cite{Guha99}, and present
an $O(\log n)$-approximation algorithm without any budget violation. For the rooted case, where a
specified vertex has to appear in the solution tree, we improve the bicriteria result of
\cite{MossRabani} to a bicriteria approximation ratio of $(1+\eps, O(\log n)/\eps^2)$ for any
positive (possibly subconstant) $\eps$. That is, for any permissible budget violation $1+\eps$,
we present an algorithm achieving a tradeoff in the guarantee for prize. Indeed, we show that this
is almost tight for the natural linear-programming relaxation used by us as well as in \cite{MossRabani}.

\end{abstract}


\section{Introduction}
In the rapidly evolving world of telecommunications and internet, design of fast and efficient networks is of utmost importance.
It is not surprising, therefore, that the field of network design has continued to be an active area of research since its inception several decades ago.
These problems have applications not only in designing computer and telecommunications networks,
but are also essential for other areas such as VLSI design and computational geometry~\cite{SteinerIndustryBook}.
Besides their appeals in these applications, basic network design problems (such as Steiner Tree, TSP, and their variants)
have been the testbed for new ideas and have been instrumental in development of new techniques in the field of approximation algorithms.

In parallel to the study by Moss and Rabani~\cite{MossRabani},
this work focuses on graph-theoretic problems in which two (independent) nonnegative weight functions
are associated with the vertices, namely cost $c(v)$ and prize (or penalty) $\pi(v)$ for each vertex
$v$ of the given graph $G(V, E)$.
The goal is to find a connected subgraph $H$ of $G$ that optimizes a certain objective.
We now summarize the four different problems, already introduced in the literature.
In the \emph{Net Worth} problem (NW), the goal is to maximize the prize of $H$ minus its cost%
\footnote{The prize or cost of a subgraph is defined as the total prize or cost of its vertices, respectively}.
\forFull{We prove in Appendix~\ref{sec:nw-hardness} that this natural problem does not admit any finite approximation algorithm.}
\forAbs{It can be proved that this natural problem does not admit any finite approximation algorithm (see the full version of this work).}
A similar, yet better-known objective is that of minimizing the cost of the subgraph plus the
penalty of nodes outside of it (which is called \emph{Prize-Collecting Steiner Tree} (PCST) in the
literature).
Two other problems arise if one restricts the range of either cost or prize in the desired solution.
In particular, the \emph{Quota} problem
tries to find the minimum-cost tree among those with a total prize
surpassing a given value, whereas the \emph{Budgeted} problem deals with maximizing the prize with a given
maximum budget for the cost.
The rooted variants ask, in addition, that a certain root vertex be included in the solution.
In the $k$-\MST{} problem, the goal is to find a minimum-cost tree with at least $k$ vertices. In the $k$-\ST{} problem, given a set of terminals, the goal is to find a minimum-cost tree spanning at least $k$ terminals.
We show the following reductions missing from the literature.
\begin{theorem}
Let $\alpha$, $0<\alpha<1$, be a constant. The following statements are equivalent (both for edge-weighted and node-weighted variants):
\begin{enumerate}[i]
\item There is an $\alpha$-approximation algorithm for the rooted $k$-\MST{} problem.
\item There is an $\alpha$-approximation algorithm for the unrooted $k$-\MST{} problem.
\item There is an $\alpha$-approximation algorithm for the $k$-\ST{} problem.
\end{enumerate}
\end{theorem}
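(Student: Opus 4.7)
I will establish the two equivalences $(i)\Leftrightarrow(ii)$ and $(ii)\Leftrightarrow(iii)$; transitivity then yields the three-way equivalence. The easy directions are $(iii)\Rightarrow(ii)$ --- treat every vertex of the $k$-\MST{} instance as a terminal and invoke the $k$-\ST{} algorithm --- and $(i)\Rightarrow(ii)$: run the rooted algorithm once for every candidate root $v\in V$ and return the cheapest output. The unrooted optimum $T^*$ contains some vertex $v^*$, so the rooted-at-$v^*$ instance has optimum at most $\mathrm{cost}(T^*)$, and one of the $|V|$ runs delivers cost at most $\alpha\cdot\OPT_{\text{unrooted}}$.

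Both remaining directions use the same gadget, a \emph{pendant bouquet}: given a distinguished vertex $v$, attach $L$ fresh zero-weight vertices, each joined to $v$ by a single (zero-weight, in the edge-weighted case) edge. Any connected subgraph touching a pendant must contain $v$, and by choosing $L$ large enough no sufficiently big tree can avoid the bouquet. For $(ii)\Rightarrow(i)$, given a rooted $k$-\MST{} instance with root $r$, attach a bouquet of size $M\ge|V|$ to $r$ and invoke the unrooted algorithm with parameter $k'=k+M$. Since $G'$ has $|V|+M$ vertices but only $|V|$ of them lie outside the bouquet, any $k'$-vertex tree must contain at least one pendant, hence $r$; and any rooted $k$-\MST{} solution in $G$ absorbs the whole bouquet for free, so the augmented optimum is at most $\OPT_{\text{rooted}}$. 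Stripping the (leaf) bouquet from the returned tree yields a tree of at least $k$ vertices in $G$, containing $r$, of the same cost.

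For $(ii)\Rightarrow(iii)$, apply the same trick on the terminal side: attach a bouquet of size $L\ge|V|$ to every terminal of the $k$-\ST{} instance and run the unrooted algorithm with parameter $k'=k(L+1)$. Any $k'$-MST hitting only $m\le k-1$ terminals has at most $(k-1)L$ reachable pendants plus $|V|$ graph vertices, a total of $(k-1)(L+1)+|V|<k(L+1)=k'$ once $L\ge|V|$, which is a contradiction, so at least $k$ terminals are used. Conversely, any $k$-\ST{} solution in $G$ with $m^*\ge k$ terminals extends, by annexing all $m^*L$ pendants of its terminals, to a tree of at least $k(L+1)$ vertices at the same cost, witnessing $\OPT_{G'}\le\OPT_{k\text{-}\ST}$. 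Deleting the pendants (leaves, so connectivity survives) from the algorithm's output gives the desired $\alpha$-approximate $k$-\ST{} solution.

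The main obstacle is the parameter tuning in the two pendant-bouquet reductions: one must pick $L$ and $k'$ so that both (a) every augmented-optimal tree is forced to contain the designated structure (root $r$, or $k$ terminals), and (b) every original feasible solution lifts, at identical cost, to a feasible solution of the augmented instance. A one-line pigeonhole calculation handles both constraints, uniformly in the edge-weighted and node-weighted settings, so the factor $\alpha$ transfers verbatim.
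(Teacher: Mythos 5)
Your proposal is correct and follows essentially the same route as the paper: the easy directions (trying all roots for $(i)\Rightarrow(ii)$, and $(iii)\Rightarrow(ii)$ by declaring all vertices terminals) are identical, and the two nontrivial directions use the paper's zero-cost pendant gadget with a pigeonhole choice of $k'$ — the paper fixes the bouquet size to exactly $n=|V|$ (with $k'=k+n$ and $k'=kn+k$ respectively) where you allow any $L,M\ge|V|$, but the argument is the same. One cosmetic note: your count $(k-1)(L+1)+|V|$ in the $(ii)\Rightarrow(iii)$ step overcounts (the tight bound is $(k-1)L+|V|$), but since it is still strictly below $k(L+1)$ when $L\ge|V|$, the conclusion is unaffected.
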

\begin{proof}
\forAbs{Here we present the equivalence of (ii) and (iii) (see the full version of this work for that of (i) and (ii)).}
\forFull{Here we present the equivalence of (ii) and (iii) (see Appendix~\ref{sec:reductions} for that of (i) and (ii)).}
We note that one way is clear by definition. To prove that (iii) implies (ii), we give a cost-preserving reduction from $k$-\ST{} to $k$-\MST{}. Let $<G=(V,E),T,k>$ be an instance of $k$-\ST{} with the set of terminals $T\subseteq V$. Let $n=|V|$. For every terminal $v_t\in T$, add $n$ vertices at distance zero of $v_t$. Let $k'=kn+k$ and consider the solution to $k'$-\MST{} on the new graph. Any subtree with at most $k-1$ terminals have at most $(k-1)n+n-1=kn-1$ vertices. Therefore an optimal solution covers at least $k$ terminals. Hence the reduction preserves the cost of optimal solution.
\end{proof}
These results improve the approximation ratio for $k$-Steiner tree.
Previously, a $4$-approximation algorithm was proved by \cite{RaviEtAl}
and a $5$-approximation algorithm was due to \cite{CRW} who had also conjectured
the presence of a $2+\eps$-approximation algorithm.
The equivalence of $k$-Steiner tree and $k$-MST
combined with the $2$-approximation result of Garg~\cite{Garg05} leads to a $2$-approximation
algorithm for $k$-Steiner tree.

A more tractable version of the prize-collecting variant is the edge-weighted case in which the costs (but not the
prizes) are associated with edges rather than nodes.
The best known approximation ratio for the edge-weighted Steiner tree problem is $1.39$ due to Byrka et al.~\cite{ByrkaGRS10}.
For the earlier work on edge-weighted variant we refer the reader to the references of \cite{ByrkaGRS10}.
In this paper, unless otherwise specified all our graphs are node-weighted and undirected.

\subsection{Contributions and Techniques}
\paragraph{Approximation algorithm for PCSF.}
%
Klein and Ravi~\cite{KleinR95} were the first to give an $O(\log h)$-approximation algorithm for the SF problem. Later,
Guha et al.~\cite{Guha99} improved the analysis of \cite{KleinR95} by showing that the approximation ratio of the algorithm of \cite{KleinR95} is w.r.t. the fractional optimal solution for the ST problem. The ST problem is a special case of SF where all demands share an endpoint.
Very recently and independently of our work, Chekuri et al.~\cite{ChekuriEV12} give an algorithm with an approximation ratio of $O(\log n)$ w.r.t. to the fractional solution for SF and higher connectivity problems. This immediately provides a reduction from PCSF to the SF problem: one can fractionally solve the LP for PCSF and pay the penalty of every demand for which the fractional solution pays at least half its penalty. Hence, the remaining demands can be (fractionally) satisfied by paying at most twice the optimal solution. Therefore, one can make a new instance of SF with only the remaining demands and get a solution within $O(\log n)$ factor of the optimal solution using the SF algorithm.

We start off by presenting a simple primal-dual $O(\log h)$-approximation algorithm
for the node-weighted prize-collecting Steiner forest (PCSF) problem where $h$ is the number of connectivity demands---see Theorem~\ref{thm:gwalg}.
Compared to the PCST algorithm given by Moss and Rabani~\cite{MossRabani} and Konemann et al.~\cite{Sina13},
our algorithm for PCSF solves a more general problem and it has a simpler analysis.
A reader familiar with the moat-growing framework%
\footnote{Introduced by Agrawal, Klein, and Ravi (AKR)~\cite{AKR} and Goemans and Williamson (GW)~\cite{GW92}.}
may recall that algorithms in this framework (e.g., that of Moss and Rabani~\cite{MossRabani} or K\"onemann et al.~\cite{Sina13}) consist of a \emph{growth phase} and a \emph{pruning phase}.
A moat is a set of dual variables corresponding to a laminar set of vertices containing \emph{terminals}---vertices with a positive penalty. The algorithm grows the moats by increasing the dual variables and adding other vertices gradually to guarantee feasibility.
In the edge-weighted Steiner tree problem, when two moats collide on an edge, the algorithm buys the path connecting the moats and merges the moats.
Roughly speaking, the algorithm stops growing a moat when either it reaches the root, or its total growth reaches the total prize of terminals inside it.
This process is not quite enough to obtain a good approximation ratio. At the end of the algorithm we may have paid too much for connecting unnecessary terminals. Thus as a final step one needs to prune the solution in a certain way to obtain the tight approximation ratio of $2-\frac{1}{n}$.

In the node-weighted problem, one obstacle is that (polynomially) many moats may collide on a vertex. Handling the proper growth of the moats and the process to merge them proves to be very sophisticated. This may have been the reason that for more than a decade no one noticed
the flaw in the algorithm of Moss and Rabani~\cite{MossRabani}%
\footnote{In private correspondence the authors of the
original work have admitted that their algorithm is flawed and that it cannot be fixed easily.
}.
Indeed the recently proposed algorithm by K\"onemann et al.~\cite{Sina13} is even more sophisticated.
In our algorithm, not only do we completely discard the pruning phase,
but we also never merge the moats (thus intuitively, a moat forms a disk centered at a terminal).
In fact, our algorithm can be thought of as a simple greedy algorithm.
Our algorithm runs in iterations, and in each iteration several disks are grown simultaneously on different
endpoints of the demands.
The growth stops at the largest possible radius where there are no
``overlaps'' and no disk has run out of ``penalty.''
If the disks corresponding to several endpoints hit each other,
a set of paths connecting them is added to the solution and
all but one representative endpoint
are removed for the next iteration.
However, if a disk is running out of penalty, the terminal at its center is removed for the next iteration.
The cost incurred at each iteration is a fraction of $\OPT$,
 proportional to the fraction of endpoints removed,
hence the logarithmic term in the guarantee.

Although our primal-dual approach is different from the approach known for SF~\cite{KleinR95,Guha99}, we indeed use the same style of argument to analyze our algorithm. The crux of these algorithms is to reduce the number of components of the solution by using a structure with minimum cost-to-reduction ratio. Besides the simplicity of this trend, it is important that by avoiding the pruning phase, these algorithms may lead to progress in related settings such as streaming and online settings.
The moat-growing approach of Konemann et al.~\cite{Sina13}, however,  allows a stronger lagrangian-preserving guarantee
\footnote{Let $T$ denote the sets of vertices purchased by the algorithm of \cite{Sina13}. It is guaranteed that $c(T)+\log(n)\pi(V\backslash T)\leq \log(n) \OPT$.}
for PCST. This property is shown to be quite important for solving various problems such as $k$-MST and $k$-Steiner tree (see e.g. \cite{CRW,JainV01}).

\paragraph{Approximation algorithms for the Budgeted problem.}
Using their algorithm for PCST, Moss and Rabani developed a bicriteria\footnote{An $(\alpha,\beta)$-bicriteria
approximation algorithm for the Budgeted problem finds a tree with total prize at least
$\frac{1}{\beta}$ fraction of that of optimal solution and total cost at most $\alpha$ factor of
the budget.} approximation algorithm for the Budgeted problem,
one that achieves an approximation factor $O(\log n)$ on prize while violating the budget constraint by no more than factor two~\cite{MossRabani}.
We present in Theorem~\ref{thm:rootedeps} a modified pruning procedure that improves the bicriteria bound to $(1+\eps, O(\log n)/\eps^2)$;
in other words, if the algorithm is allowed to violate the budget constraint by only a factor $1+\eps$ (for any positive $\eps$),
the approximation guarantee on the prize will be $O(\log n)/\eps^2$.
In fact, we also show using the natural linear-programming relaxation (used in \cite{MossRabani} as well),
that it is not possible to improve these bounds significantly---
\forFull{see Appendix~\ref{sec:natural-lp}.}
\forAbs{see the full version.}
In particular, there are instances for which the fractional solution is $\OPT/\eps$,
however, no solution of cost at most $1+\eps$ times the budget has prize more than $O(\OPT)$.
Our integrality-gap construction fails if the instance is not rooted.
Indeed, in that case, we show how to obtain an $O(\log n)$-approximation algorithm with
no budget violations---see Theorem~\ref{thm:logunrooted}.
This improves the $O(\log^2n)$-approximation algorithm of Guha et al.~\cite{Guha99}.\footnote{%
The $O(\log^2n)$-approximation algorithm can be derived from the results in \cite{Guha99} with
some efforts, not as explicitly as cited by Moss and Rabani~\cite{MossRabani}.}
To get over the integrality gap of the LP formulation,
we prove several structural properties for near-optimal solutions.
By restricting the solution to one with these properties,
we use a bicriteria approximation algorithm 
as a black box to find a near-optimal solution.
Finally we use a generalization of the trimming method of \cite{Guha99}
to avoid violating the budget.

\forFull{
\subsection{Organization}
Next in Section~\ref{sec:mossrabani} we briefly discuss the method of Moss and Rabani
for deriving an algorithm for the budget problem from that for PCST.
We then explain and analyze our algorithm for PCSF in Section~\ref{sec:pcsf}.
Section~\ref{sec:budgeted} discusses our trimming procedure
and how it leads to improved results for Budgeted problems.
Finally, the appendices contain minor results for hardness of NW and reductions
between special cases of the Quota problem, as well as omitted proofs.
} 

\section{The Prize-Collecting Steiner Forest Problem}\label{sec:mossrabani}
The starting point of the algorithm of Moss and Rabani~\cite{MossRabani} is a standard LP relaxation for the rooted version. For the Quota and Budgeted problems they show that any (fractional) feasible solution can be approximated by a convex combination of sets of nodes connected (integrally) to the root. Given the support of such a convex combination, it follows from an averaging argument that a proper set can be found. Thus the problem comes down to finding the support of the convex combination. They show that given a black-box algorithm which solves the PCST problem with the approximation factor $O(\log n)$, one can obtain the support in polynomial time.
%



The main result of this section is a very simple, and maybe more elegant algorithm for
the classical problem of PCSF (and thus PCST). As mentioned before, using moats and having
a pruning phase lead to the main difficulty in the analysis of previous algorithms.
These seem to be a necessary evil for achieving a tight constant approximation factor for
the edge-weighted variant.
Surprisingly, we show neither is needed in the node-weighted variant.
Instead of moats, we use dual \emph{disks} which are centered on a \emph{single} terminal and we do not need a pruning phase.


\subsection{Preliminaries}
Consider a graph $G=(V,E)$ with a node-weight function $c:V\to \mathds{R}_{\geq 0}$. For a subset $S\subseteq V$, let $c(S):=\sum_{v\in S} c(v)$. In the \emph{Steiner Forest} problem, given a set of demands $\Mc{L}=\langle (s_1,t_1),\ldots,(s_h,t_h) \rangle$, the goal is to find a set of vertices $X$ such that for every demand $i\in[h]$, $s_i$ and $t_i$ are connected in $G[X]$. The vertices $s_i$ and $t_i$ are denoted as the \emph{endpoints} of the demand $i$. In PCSF a penalty (prize) $\pi_i\in \mathds{R}_{\geq 0}$ is associated with every demand $i\in[h]$. If the endpoints of a demand are not connected in the solution, we need to pay the penalty of the demand.
The objective cost of a solution $X\subseteq V$ is
\vspace{-0.2cm}
\begin{align*}
\pcsf(X)=c(X)+ \hspace{-3mm}\sum_{\hspace{-3mm}i\in[h] : i\textit{ is not satisfied}} \hspace{-6mm} \pi_i.
\end{align*}
\vspace{-0.2cm}

A \emph{terminal} is a vertex which is an endpoint of a demand. Let $\Tau$ denote the set of terminals. We may assume that the cost of a terminal is zero. We also assume the endpoints of all demands are different%
\footnote{Both assumptions are without loss of generality. For every demand $(s_i,t_i)$, attach a new vertex $s^i$ of cost zero to $s_i$ and similarly attach a new vertex $t^i$ of cost zero to $t_i$. Now interpret $i$ as the demand between $s^i$ and $t^i$. The optimal cost does not change.}
(thus $|\Tau|=2h$).
%
For a pair of vertices $u$ and $v$ and a cost function $c$, let $d^c(u,v)$ denote the length of the shortest path with respect to $c$ connecting $u$ and $v$, including the cost of endpoints.

For a set of vertices $S$ let $\delta(S)$ denote the set of vertices that are not in $S$ but have neighbors in $S$.
A set $S$ \emph{separates} a demand $i$ if exactly one of $s_i$ and $t_i$ is in $S$. Let $\Mc{S}_i$ denote the collection of sets separating the demand $i$ and let $\Mc{S}=\bigcup_i \Mc{S}_i$. For a set $S$, define the penalty of $S$ as half of the total penalty of demands separated by $S$, i.e., $\pi_{\Mc{L}}(S)=\frac{1}{2}\sum_{i: S\in \Mc{S}_i} \pi_i$. We may drop the index $\Mc{L}$ when there is no ambiguity.
The PCSF problem can be formulated as the following standard integer program (IP):
\vspace{-0.2cm}
\begin{align*}
    & \text{Minimize}          & & \sum_{v\in V\backslash \Tau} c(v) \x(v) + \sum_{S\in \Mc{S}} \pi(S) \z(S)\\
    & \forall i\in [h], S\in \Mc{S}_i       & & \sum_{v\in \delta(S)} \x(v) + \sum_{R | S\subseteq R \in \Mc{S}_i} \z(R)  \geq 1 \\
    &                               & & \x(v),\z(S)\in \{0,1\}
\end{align*}
Given a solution $X\subseteq V$ to the PCSF problem one can easily make a feasible solution $\x$ to the IP with the same objective value as $\pcsf(X)$: since the cost of a terminal is zero, we assume $\Tau\subseteq X$. For every vertex $v\in X$ set $\x(v)=1$ and for every connected component $CC$ of $G[X]$ set $\z(V\backslash CC)=1$. It is also easy to verify since the cost of a terminal is zero, any (integral) feasible solution $\x$ corresponds to a solution $X\subseteq V$ for the PCSF problem with (at most) the same cost.
One may relax the IP by allowing assignment of fractional values to the variables. Let $\OPT$ denote the objective value of the optimal solution for the relaxed linear program (LP).
The following is the dual program~\ref{lp:dual} corresponding to the relaxed LP.
\vspace{-0.2cm}
\begin{align}
    & \text{Maximize}                  & & \sum_{S\in \Mc{S}} \y(S)  \tag{$\Mc{D}$} \label{lp:dual}\\
    & \forall v\in V     & & \sum_{S\in\Mc{S}: v\in \delta(S)} \y(S)\leq c(v) \notag \\
    & \forall S\in \Mc{S}             & & \sum_{S'\subseteq S} \sum_{i: S,S'\in \Mc{S}_i} \y_i(S') \leq \pi(S) \notag \\
    &                               & & \y_i(S)\geq 0, \y(S)=\sum_{i: S\in \Mc{S}_i} \y_i(S) \notag
\end{align}
In the case of Steiner tree, the dual variables are defined w.r.t. a set $S$. However, in Steiner forest, the dual variables are in the form $\y_i(S)$, i.e., they are defined based on a demand as well. This has been one source of the complexity of previous primal-dual algorithms for Steiner forest problems. Interestingly, in our approach, we only need to work with a \emph{simplified dual} constructed as follows.
\paragraph{Cores and Simplified Duals} Let $c$ and $\Mc{L}$ denote a node-weight function and a set of demands, respectively. Let $Z_c$ denote the set of vertices with zero cost. We note that the terminals are in $Z_c$. A set $C\subseteq V$ is a \emph{core} if $C$ is a connected component of $G[Z_c]$ \emph{and} contains a terminal (i.e., an endpoint of a demand in $\Mc{L}$). Let $\overline{\Mc{S}(c,\Mc{L})}$ be the collection of sets separating one core from the other cores, i.e., a set $S$ is in $\overline{\Mc{S}(c,\Mc{L})}$ if $S$ contains a core but has no intersection with other cores. For a set $S\in \overline{\Mc{S}(c,\Mc{L})}$, let $\core(S)$ denote the core inside $S$. Note that $\pi_{\Mc{L}}(S)=\pi_{\Mc{L}}(\core(S))$.
A simplified dual w.r.t. $c$ and $\Mc{L}$ is the following program \ref{lp:simdual}.
\vspace{-.2cm}
\begin{align}
    & \text{Maximize}                  & & \sum_{S\in \Mc{S}} \y(S)  \tag{$\overline{\Mc{D}(c,\Mc{L})}$} \label{lp:simdual}\\
    & \forall v\in V     & & \sum_{S\in\overline{\Mc{S}(c,\Mc{L})}: v\in \delta(S)} \y(S)\leq c(v) \tag{C1}\label{eq:dualconst} \\
    & \forall S\in \overline{\Mc{S}(c,\Mc{L})}             & & \sum_{S': \core(S)\subseteq S' \subseteq S} \y(S') \leq \pi_{\Mc{L}}(S) \tag{C2}\label{eq:prizeconst} \\
    &                               & & \y(S)\geq 0 \notag
\end{align}

Observe that $\overline{\Mc{S}(c,\Mc{L})}\subseteq \Mc{S}$. Indeed \ref{lp:simdual} is the same as \ref{lp:dual} with only (much) fewer variables. Thus the program \ref{lp:simdual} is only more restricted than \ref{lp:dual}. In the rest of the paper, unless specified otherwise, by a dual we mean a simplified dual. When clear from the context, we may omit the indices $c$ and $\Mc{L}$.

\paragraph{Disks} Consider a dual vector $\y$ initialized to zero. A \emph{disk of radius $R$ centered at a terminal $t$} is the dual vector obtained from the following process: Initialize the set $S$ to the core containing $t$.
Increase $\y(S)$ until for a vertex $u$ the dual constraint~\ref{eq:dualconst} becomes tight. Add $u$ to $S$ and repeat with the new $S$. Stop the process when the total growth (i.e., sum of the dual variables) reaches $R$.
A disk is \emph{valid} if $\y$ is feasible. In what follows, by a disk we mean a valid disk unless specified otherwise.

A vertex $v$ is \emph{inside} the disk if $d^c(t,v)$ is strictly less than $R$. The \emph{continent} of a disk is the set of vertices inside the disk.
Further, we say a vertex $v$ is on the \emph{boundary} of a disk if it is not inside the disk but has a neighbor $u$ such that $d^c(t,u)\leq R$. Note that $u$ is not necessary inside the disk.
\forFull{See Figure~\ref{fig:disk} for a graphical representation of a disk.}
The following facts about a disk of radius $R$ centered at a terminal $t$ can be derived from the definition:

\forFull{
\begin{figure}[!h]
\begin{center}
    \includegraphics[width=0.8\textwidth]{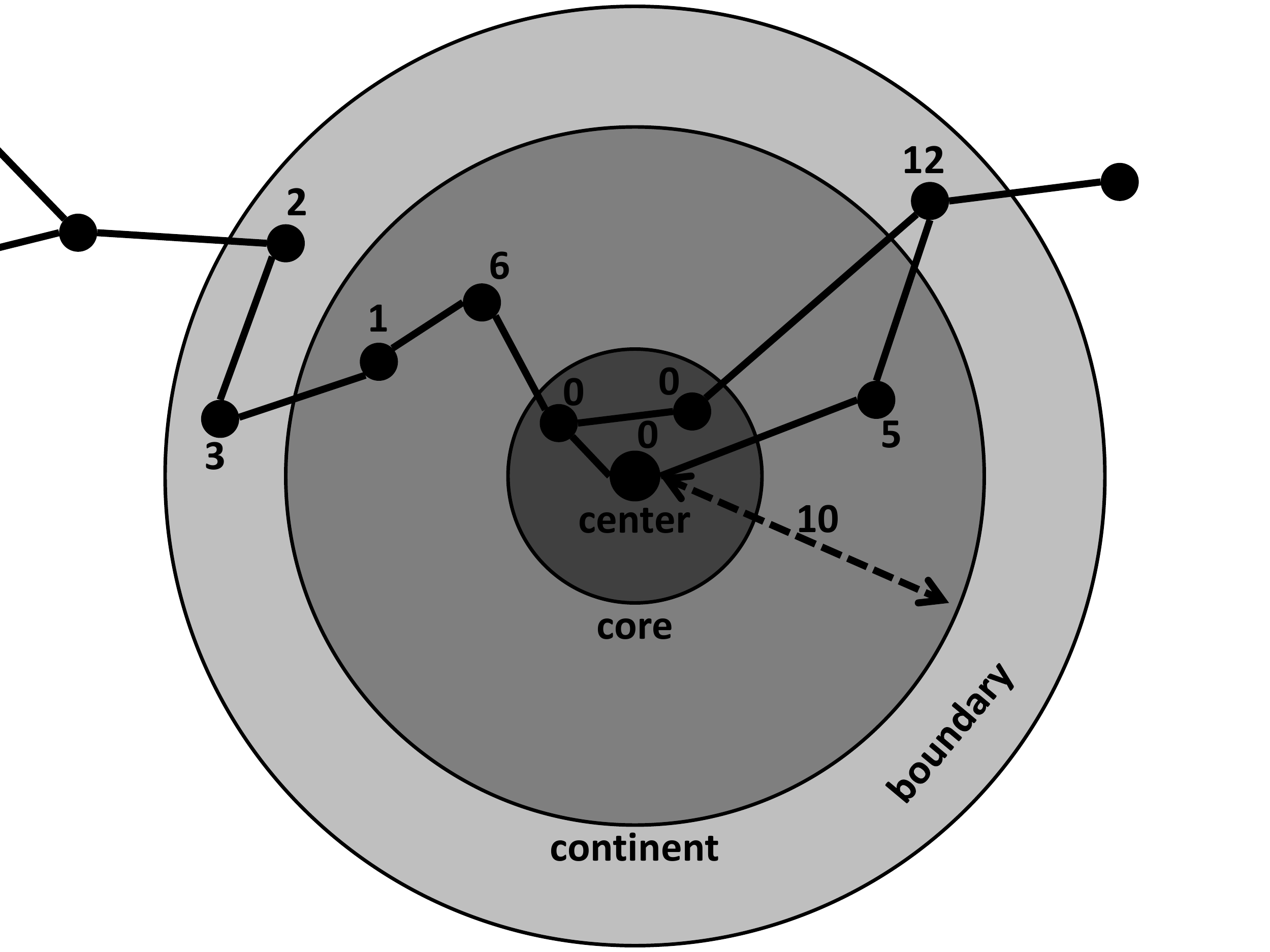}
  \end{center}
  \caption{A graphical representation of a disk of radius $10$. The vertex at the center of the disk is an endpoint of a demand. The numbers show the cost of vertices. The inner-most circle contains the core, while the outer-most circle contains the continent \emph{and} the boundary.}
\label{fig:disk}
\end{figure} 
}

\begin{fact}\label{fact:diskvalue}
The (dual) objective value of the disk is exactly $R$.
\end{fact}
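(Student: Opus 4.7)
The plan is to observe that the claim follows almost immediately by unpacking the definition of the disk-growing procedure, combined with the stopping rule. The construction manipulates the dual vector in a very controlled way: starting from $\y \equiv 0$, at every step of the process we increment exactly one variable $\y(S)$, where $S$ runs through a monotonically increasing chain of sets $S_0 \subsetneq S_1 \subsetneq \cdots \subsetneq S_\ell$ with $S_0$ equal to the core containing $t$. No other dual variable is ever touched. Consequently, when the process terminates, the dual objective $\sum_{S \in \Mc{S}} \y(S)$ is simply the sum of the nonnegative increments applied during the execution.

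Next I would invoke the stopping condition. The procedure explicitly halts as soon as the ``total growth,'' defined as $\sum_{S} \y(S)$, reaches the prescribed radius $R$. Because each iteration grows only one of the $\y(S_j)$ continuously from its current value, we can always interrupt the last iteration at the precise instant when the running sum hits $R$; thus at termination $\sum_{j=0}^{\ell} \y(S_j) = R$ exactly. Combining with the previous paragraph, the dual objective equals $R$, which is the statement of the fact.

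The only place that requires a brief check is that the chain of sets $S_0, S_1, \ldots, S_\ell$ on which we grow duals actually lies inside $\overline{\Mc{S}(c,\Mc{L})}$, so that these are legitimate variables of the simplified dual~\ref{lp:simdual}. Each $S_j$ is obtained from $S_{j-1}$ by appending a single vertex $u$ that tightened constraint~\ref{eq:dualconst}; by induction $S_j$ contains $\core(S_0)$, and the containment in $\overline{\Mc{S}(c,\Mc{L})}$ additionally requires that $S_j$ is disjoint from every other core. Once one verifies this (which is the natural ``validity'' hypothesis for disks and is exactly what the notion of a valid disk is meant to capture), the argument above is complete. I do not anticipate any real obstacle: the fact is essentially definitional, recording that we have parameterized the growth process precisely so that its total dual value coincides with the radius.
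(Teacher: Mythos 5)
Your argument is correct and matches the paper's intent: the paper presents this as a Fact ``derived from the definition,'' and your careful unpacking of the disk-growing process---that only a single nested chain of dual variables is ever incremented, and that the process is halted precisely when the running sum of increments reaches $R$---is exactly the definitional reading the authors have in mind. The one subtlety you flag, that the chain must remain inside $\overline{\Mc{S}(c,\Mc{L})}$, is indeed folded into the ``valid disk'' hypothesis, so nothing is missing.
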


\begin{fact}\label{fact:inside}
For every vertex inside the disk, the dual constraint~\ref{eq:dualconst} is tight.
\end{fact}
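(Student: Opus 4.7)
The plan is to establish that any vertex $v$ with $d^c(t,v) < R$ is absorbed into the growing set $S$ at some point during the construction of the disk, and that the act of absorbing $v$ both makes its constraint \ref{eq:dualconst} tight and preserves that tightness through the end of the process.

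First, I would set up notation for the growth: let $S_0 \subseteq S_1 \subseteq \cdots$ be the sequence of sets produced, with $S_0$ equal to the core containing $t$, and let $r_k$ denote the cumulative growth just after $S_k$ is formed. The central step is to prove a Dijkstra-style invariant: the vertex $u_k$ added at step $k \geq 1$ satisfies $r_k = d^c(t,u_k)$, and the absorption order is by nondecreasing $d^c$-distance from $t$. This is essentially classical node-weighted Dijkstra and would follow by induction on $k$, using the fact that the process grows only the current $\y(S_k)$ until \ref{eq:dualconst} at some boundary vertex first attains equality, and that this trigger vertex minimizes $d^c(t,\cdot)$ among the as-yet unabsorbed vertices. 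I expect this invariant to be the only nontrivial step in the argument.

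With the invariant in hand, any $v$ satisfying $d^c(t,v) < R$ is absorbed at cumulative growth $d^c(t,v)$, strictly before the process halts. At the instant of absorption, \ref{eq:dualconst} at $v$ holds with equality by the very definition of the trigger. Subsequently, $v \in S_j$ for every later $j$, so $v \notin \delta(S_j)$ and no further growth of any $\y(S_j)$ contributes to the left-hand side of \ref{eq:dualconst} at $v$; tightness is therefore preserved through the end of the growth. Core vertices require a brief separate remark: they lie in $S_0$ from the outset and never appear in any $\delta(S_j)$, so the left-hand side of \ref{eq:dualconst} is zero and $c(v)=0$, making the constraint trivially tight.
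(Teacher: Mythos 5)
Your argument is correct and is the natural filling-in of what the paper leaves implicit: Fact~\ref{fact:inside} is stated without proof, but your central Dijkstra-style invariant (a continent vertex $u$ is absorbed exactly when the cumulative growth reaches $d^c(t,u)$) is precisely the claim the paper asserts ``by induction'' in its proof of Lemma~\ref{lem:boundarydistance}. The remaining steps --- tightness at the absorption trigger is preserved since $v$ thereafter lies in $S$ and never again in $\delta(S)$, and core vertices are trivial because $c(v)=0$ --- complete the intended derivation.
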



\begin{fact}\label{fact:outside}
If a set $S$ does not include the center, then $\y(S)=0$. Further, if $S$ is not a subset of the continent, then $\y(S)=0$.
\end{fact}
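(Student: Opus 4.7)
The plan is to read off which sets can carry positive dual mass directly from the construction of the disk. Let $S_0 := \core(t) \subsetneq S_1 \subsetneq \cdots \subsetneq S_k$ denote the nested family of sets produced by the growth process, with $S_{j+1} = S_j \cup \{u_j\}$ where $u_j$ is the vertex whose constraint~\ref{eq:dualconst} becomes tight at the end of the growth on $S_j$. Write $a_j := \y(S_j) \geq 0$, so by Fact~\ref{fact:diskvalue} we have $a_0 + \cdots + a_k = R$. Any set $S$ outside this family receives no growth, hence $\y(S) = 0$, and it suffices to examine the $S_j$'s.

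The first claim is essentially free: every $S_j$ contains $S_0 = \core(t) \ni t$, so $\y(S) > 0$ forces $S$ to contain the center; contrapositively, if $S$ misses the center, $\y(S) = 0$.

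For the second claim, the key identity I will establish by induction on $j$ is
\[
d^c(t, u_j) \;=\; a_0 + a_1 + \cdots + a_j,
\]
i.e., the shortest-path distance from $t$ to $u_j$ equals the cumulative growth at the instant $u_j$ is absorbed. In the base case $j=0$, the tightness equation for $u_0$ reduces to $a_0 = c(u_0)$ (since $u_0 \in \delta(S_0)$ and no other $S_i$ has yet received growth), and this equals $d^c(t, u_0)$ because the core has zero cost and $u_0$ is a neighbor of the core. For the inductive step, the tightness equation $\sum_{i \,:\, u_j \in \delta(S_i)} a_i = c(u_j)$ combined with the inductive hypothesis applied to the $u_i$'s already in $S_j$ yields $d^c(t, u_j) \leq a_0 + \cdots + a_j$; the reverse inequality uses dual feasibility to argue that any strictly shorter path would have triggered an earlier tightness event, contradicting the order in which vertices were absorbed.

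Granted the identity, the second claim drops out immediately. Suppose $\y(S_l) > 0$ and take any $v \in S_l$. If $v$ lies in the core then $d^c(t,v)=0 < R$, so $v$ is in the continent. Otherwise $v = u_j$ for some $j < l$ and $a_l > 0$, so
\[
d^c(t, v) \;=\; a_0 + \cdots + a_j \;<\; a_0 + \cdots + a_l \;\leq\; R,
\]
again placing $v$ in the continent. The main obstacle will be the inductive verification of the distance identity---in particular, ruling out hypothetical shortcuts through vertices not yet in $S_j$, which is exactly where dual feasibility on the \emph{other} sets having $u_j$ on their boundary earns its keep.
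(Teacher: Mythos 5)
Your proof is correct, and it takes the route the paper intends: the paper offers no explicit proof of this fact, treating it as immediate from the nested-growth definition of a disk, and the distance identity $d^c(t,u_j)=a_0+\cdots+a_j$ that you verify by induction is precisely the claim the paper later asserts "is easy to show" by induction at the start of its proof of Lemma~\ref{lem:boundarydistance}. You have filled in the derivation the paper leaves implicit, without deviating from its approach.
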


Let $\y_1,\ldots,\y_k$ denote a set of disks. The \emph{union} of the disks is simply a dual vector $\y$ such that $\y(S)=\sum_i \y_i(S)$ for every set $S\subseteq \Mc{S}$.
A set of disks are \emph{non-overlapping} if their union is a feasible dual solution (i.e., both set of constraints~\ref{eq:dualconst} and \ref{eq:prizeconst} hold). If a vertex $v$ is inside a disk, the corresponding dual constraint is tight. Thus for any set $S$ such that $v\in\delta(S)$, the dual variable $\y(S)$ cannot be increased. On the other hand since the distance between $v$ and the center is \emph{strictly} less than the radius, there exists a set containing $v$ with positive dual value.
This observation leads to the following.

\begin{proposition}\label{prop:inorout}
Let $\y$ be the union of a set of non-overlapping disks $\y_1,\ldots,\y_k$. A vertex inside a disk cannot be on the boundary of another disk.
\end{proposition}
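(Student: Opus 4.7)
The plan is to argue by contradiction, pitting the tight dual constraint at $v$ implied by ``$v$ is inside'' disk $\y_a$ against the strictly positive dual mass that disk $\y_b$ must place on sets adjacent to $v$ when ``$v$ is on the boundary'' of $\y_b$. So assume for contradiction that $v$ lies inside a disk $\y_a$ of radius $R_a$ and simultaneously on the boundary of a distinct disk $\y_b$ centered at $t_b$ with radius $R_b$.

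The first half of the argument is essentially bookkeeping. Fact~\ref{fact:inside} applied to $\y_a$ in isolation gives $\sum_{S:v\in\delta(S)} \y_a(S) = c(v)$. The non-overlap hypothesis says the union $\y$ is dual-feasible, so constraint~\ref{eq:dualconst} at $v$ reads $\sum_{S:v\in\delta(S)} \y(S)\le c(v)$; since each disk's dual is nonnegative, every disk other than $\y_a$ must contribute $0$ to this sum, and in particular $\sum_{S:v\in\delta(S)} \y_b(S) = 0$.

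The second half, which is the main step, shows the boundary condition forces the same sum to be strictly positive. The plan is to unpack the growth process defining $\y_b$: it produces a nested chain $\core(t_b)=S_0\subsetneq S_1\subsetneq\cdots\subsetneq S_\ell=\mathrm{continent}(\y_b)$, with each $\y_b(S_j)$ equal to the positive duration of the corresponding growth step, and total $\sum_j \y_b(S_j)=R_b$. Let $u$ be a neighbor of $v$ witnessing the boundary condition, so $d^c(t_b,u)\le R_b$. In the generic case $d^c(t_b,u)<R_b$, the vertex $u$ joins the chain strictly before growth halts, so some $S_j$ in the chain contains $u$ with $\y_b(S_j)>0$; since $v$ is outside the continent it is not in $S_j$, while it is adjacent to $u\in S_j$, giving $v\in\delta(S_j)$ and hence a positive contribution to the sum that was just shown to be zero. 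The main obstacle is the degenerate case in which every such witness $u$ satisfies $d^c(t_b,u)=R_b$ exactly, so no such $u$ is ever added to the chain; I plan to dispose of this by the usual tie-breaking convention that a vertex tightening at the instant growth halts is placed into $S_\ell$, or equivalently by an infinitesimal perturbation of the node costs, either of which is consistent with the rest of the analysis.
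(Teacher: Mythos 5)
Your main argument matches the paper's informal justification preceding the proposition: $v$ being inside disk $\y_a$ makes constraint~\ref{eq:dualconst} at $v$ tight already from $\y_a$ alone, so under non-overlap $\y_b$ must contribute zero at $v$, yet a witness $u$ strictly inside $\y_b$ would force a positive contribution. The degenerate case you flag, however, is a genuine gap, and neither of your proposed fixes closes it. Placing $u$ in $S_\ell$ by a tie-breaking rule is useless, since $\y_b(S_\ell)=0$ and no set of positive $\y_b$-mass ever contains $u$, so the contribution at $v$ remains zero; in fact Lemma~\ref{lem:boundarydistance} gives $\sum_{S:v\in\delta(S)}\y_b(S)=R_b-(d^c(t_b,v)-c(v))$, which vanishes exactly when $v$'s nearest neighbor toward $t_b$ sits at distance $R_b$, so the union really is feasible at $v$ in this case. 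Perturbing the costs changes the instance rather than resolving it: perturbing $c(u)$ downward makes the two disks overlap (so the perturbed hypothesis is vacuous), and perturbing upward destroys the boundary relation; neither outcome transfers back to the unperturbed instance.

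The proposition still holds in the degenerate case, but the infeasibility surfaces at a different vertex, most naturally at $u$ itself. With $d^c(t_b,u)=R_b$ and $c(u)>0$, $u$ is on the boundary of $\y_b$ and Lemma~\ref{lem:boundarydistance} yields $\sum_{S:u\in\delta(S)}\y_b(S)=R_b-(d^c(t_b,u)-c(u))=c(u)$, so $\y_b$ alone saturates the constraint at $u$. But $u$ is adjacent to $v$, which lies strictly inside $\y_a$: hence $u$ is either inside $\y_a$, contributing $c(u)$ by Fact~\ref{fact:inside}, or on its boundary with $d^c(t_a,u)-c(u)\le d^c(t_a,v)<R_a$, contributing $R_a-(d^c(t_a,u)-c(u))>0$ by Lemma~\ref{lem:boundarydistance}. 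Either way $\y_a$ adds a strictly positive amount at $u$, so constraint~\ref{eq:dualconst} fails at $u$ and the disks overlap after all. (The sub-case $c(u)=0$ requires one more step along the shortest $t_b$--$u$ path to the first positive-cost vertex at distance exactly $R_b$.) For what it is worth, the paper's own one-sentence observation also argues only at $v$ and leaves this degenerate branch implicit, but that is a reason to supply the missing case, not to wave it away.
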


Proposition~\ref{prop:inorout} implies that in the union of a set of non-overlapping disks, the continents are pairwise far from each other.
This intuition leads to the following
\forFull{(proof in Appendix~\ref{sec:omitted}).}
\forAbs{.\footnote{Due to the lack of space, we have omitted some of the proofs throughout the paper. We defer the reader to the full version of this paper for the omitted proofs.}}


\begin{lemma}\label{lem:maxR}
    Suppose $T'$ is a subset of terminals such that the distance between every pair of them is non-zero.
    Let $R$ denote the maximum radius such that the $|T'|$ disks of radius $R$ centered at terminals in $T'$ are non-overlapping. Consider the union of such disks. Either (i) the constraint~\ref{eq:prizeconst} is tight for a continent; or (ii) the constraint~\ref{eq:dualconst} is tight for a vertex on the boundary of multiple disks.
\end{lemma}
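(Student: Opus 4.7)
The plan is to view the construction of the disks as a continuous-time process in which the common radius $R$ grows from $0$ at unit rate, and to identify what first forces the process to stop. For each $t \in T'$, at radius $R$ I write $\y_t$ for the single disk of radius $R$ centered at $t$, and analyze the union $\y = \sum_t \y_t$. Since the terminals in $T'$ are pairwise at nonzero distance, the cores of distinct disks are disjoint, so $\y$ is trivially feasible at $R = 0$. Let $R^\star$ be the supremum of those $R$ for which $\y$ is dual-feasible; by continuity some dual constraint is tight at $R = R^\star$, and further growth would violate it.

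If the tight constraint is a prize constraint \ref{eq:prizeconst}, then, since by Fact~\ref{fact:diskvalue} the total dual value of $\y_t$ equals $R$ and since $\pi_\Mc{L}(S)$ depends only on $\core(S)$, the tight constraint corresponds to the full continent of some disk $\y_t$, yielding case (i). Otherwise a vertex constraint \ref{eq:dualconst} is tight at some $v$, and I claim $v$ must lie on the boundary of at least two disks. Suppose not. If $v$ is inside some disk's continent, Proposition~\ref{prop:inorout} forbids $v$ from being on any other disk's boundary, so the right-derivative of $\sum_{S: v \in \delta(S)} \y(S)$ at $R^\star$ is zero and this tightness cannot obstruct further growth. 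If instead $v$ lies on the boundary of exactly one disk $\y_t$, then by Fact~\ref{fact:outside} combined with Proposition~\ref{prop:inorout} every other disk contributes zero to the sum at $v$, so the tight event is just a standard single-disk absorption of $v$ into the continent of $\y_t$; after the absorption, the contribution of $\y_t$ at $v$ stays fixed at $c(v)$ by Fact~\ref{fact:inside} and the rate at $v$ drops to zero, so growth can continue. Either subcase contradicts the maximality of $R^\star$, leaving case (ii).

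The main obstacle I anticipate is handling simultaneous tightness events at $R = R^\star$ rigorously: several absorptions may occur at the same instant, either within a single disk or across several disks, and one must be sure that processing them in some order does not secretly violate a different constraint. I would resolve this by observing that at most $O(|T'|\cdot|V|)$ discrete events (absorptions plus prize saturations) occur over the entire process, so all events coinciding at $R^\star$ can be enumerated and processed in an arbitrary order; and by appealing throughout to Proposition~\ref{prop:inorout} to guarantee that each single-disk absorption at a single-boundary vertex touches no other disk's positive dual mass, so the union remains feasible immediately after the absorption. With these bookkeeping details in place, the maximality argument above produces (i) or (ii) as claimed.
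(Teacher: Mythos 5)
Your proposal is correct and follows essentially the same approach as the paper: identify the constraint that becomes binding at the maximal radius, case on whether it is a prize constraint or a vertex constraint, and use Facts~\ref{fact:diskvalue}, \ref{fact:outside}, and Proposition~\ref{prop:inorout} to conclude (i) or (ii). The only cosmetic differences are that you argue case~(ii) contrapositively (ruling out $v$ being interior to a continent or on a single boundary) where the paper argues directly that a second disk must contribute, and your case~(i) compresses into one sentence the paper's explicit monotonicity step showing that if any set $S$ with $\core(S)=C$ has a tight prize constraint then so does the full continent (because the continent's left-hand side dominates that of any such $S$ while the right-hand side $\pi(C)$ is identical).
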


The final tool we need for the analysis of the algorithm states a precise relation between the dual variables and the distance of a vertex on the boundary.
\forFull{The proof is based on the analysis of the growth of a disk which is presented in Appendix~\ref{sec:omitted}.}

\begin{lemma}\label{lem:boundarydistance}
Let $v$ be a vertex on the boundary of a disk $\y$ of radius $R$ centered at a terminal $t$. We have $\sum_{S|v\in\delta(S)} \y(S)=R-(d^c(t,v)-c(v))$.
\end{lemma}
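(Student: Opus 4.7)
The plan is to identify the disk-growth process with a Dijkstra-like exploration from the terminal $t$ under the node-cost function $c$: I will show that every added vertex $u$ enters the growing set at ``time'' (cumulative growth) exactly $d^c(t,u)$, and then read off the boundary sum directly.

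First I would parametrize the growth by a continuous time $\tau\in[0,R]$ equal to the total dual growth so far, and denote by $S(\tau)$ the current set. For each vertex $u$ ever added, let $a(u)$ be its entry time; and for each $v$ with a neighbor ever in $S$, let $t^\ast(v)=\min_{u\sim v}a(u)$ be the time the first neighbor of $v$ enters $S$ (so $v\in\delta(S(\tau))$ precisely for $\tau\in[t^\ast(v),a(v))$, or for $\tau\in[t^\ast(v),R]$ if $v$ is never added). During each growth step the active dual variable $\y(S_i)$ grows at rate one, so for every $v\in\delta(S_i)$ the sum $\sum_{S:\,v\in\delta(S)}\y(S)$ also accumulates at rate one. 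In particular, $u$ is added exactly when constraint~\ref{eq:dualconst} becomes tight, yielding the identity
\[a(u)\;=\;t^\ast(u)+c(u).\]

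Next I would prove $a(u)=d^c(t,u)$ by induction on the order of addition. The base case is $u$ in the core (the connected component of $G[Z_c]$ containing $t$): such $u$ sit in $S_0$, so $a(u)=0=d^c(t,u)$. For the inductive step, let $v^\ast\sim u$ be the first neighbor of $u$ to enter $S$; by the inductive hypothesis $t^\ast(u)=a(v^\ast)=d^c(t,v^\ast)=\min_{v\sim u}d^c(t,v)$, and combining with the previous identity and the standard decomposition $d^c(t,u)=\min_{v\sim u}d^c(t,v)+c(u)$ gives $a(u)=d^c(t,u)$.

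Finally, for a boundary vertex $v$ the definition yields $d^c(t,v)\geq R$ (so $v$ is never added) together with some $u\sim v$ with $d^c(t,u)\leq R$ (so $t^\ast(v)\leq R$). Applying the Dijkstra correspondence to the earliest neighbor of $v$ to enter $S$ gives $t^\ast(v)=\min_{u\sim v}d^c(t,u)=d^c(t,v)-c(v)$. Since $v$ lies in $\delta(S(\tau))$ for all $\tau\in[t^\ast(v),R]$ and the dual accumulates at rate one on this interval,
\[\sum_{S:\,v\in\delta(S)}\y(S)\;=\;R-t^\ast(v)\;=\;R-(d^c(t,v)-c(v)),\]
which is the claimed identity. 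The main obstacle is making the continuous-time bookkeeping rigorous in the presence of zero-cost vertices (which must be added instantaneously at $a(u)=t^\ast(u)$) and simultaneous tightenings of several constraints; once the ``rate-one accumulation'' identity $a(u)-t^\ast(u)=c(u)$ is justified carefully for every added vertex, the Dijkstra induction and the one-line final calculation finish the proof.
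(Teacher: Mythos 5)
Your proof is correct and follows essentially the same route as the paper's: both identify the disk-growth process with a Dijkstra-style exploration, establish that a continent vertex $u$ enters the growing set when cumulative growth reaches $d^c(t,u)$, and then observe that once the nearest neighbor of the boundary vertex $v$ is swallowed (at time $d^c(t,v)-c(v)$), the quantity $\sum_{S:v\in\delta(S)}\y(S)$ accumulates at rate one until the total growth hits $R$. Your write-up makes explicit the induction and the rate-one bookkeeping that the paper compresses into ``using induction it is easy to show,'' and the paper additionally spells out the degenerate case where the nearest neighbor satisfies $d^c(t,u)=R$ (so the sum is $0$), which your formula handles correctly but your derivation glosses over; otherwise the two arguments coincide.
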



\subsection{An algorithm for the PCSF problem}\label{sec:pcsf}
The algorithm finds the solution $X$ iteratively.
Let $X_i$ denote the set of vertices bought after iteration $i$ where $X_0$ is the set of terminals. For every $i$, the \emph{modified cost function} $c_i$ is a copy of $c$ induced by setting the cost of vertices in $X_{i-1}$ to zero, i.e., $c_i=c[X_{i-1}\to 0]$.
At iteration $i$ there is a set of \emph{active} demands $\Mc{L}_i\subseteq \Mc{L}$ and the dual program $\Mc{D}_i=\overline{\Mc{D}(c_i,\Mc{L}_i)}$. The program $\Mc{D}_i$ is the simplified dual program w.r.t. the modified cost function and the active demands. Note that $\Mc{D}_i$ is stricter than \ref{lp:dual}, thus the objective value of a feasible solution to $\Mc{D}_i$ is a lower bound for $\OPT$. The algorithm guarantees that for every $i<j$, $X_i\subseteq X_j$ and $\Mc{L}_i$ is a superset of $\Mc{L}_j$.


The algorithm is as follows (see Algorithm~\ref{alg:gw}). We initialize $X_0=\Tau$, $c_1=c$, and $\Mc{L}_1=\Mc{L}$.
At iteration $i$, consider the cores formed w.r.t. $c_i$ and $\Mc{L}_i$. Let $T_i$ denote a set which has exactly one terminal in each core (so the number of cores is $|T_i|$).
The algorithm finds the maximum radius $R_i$ such that the $|T_i|$ disks of radius $R_i$ centered at each terminal in $T_i$ are non-overlapping w.r.t. $\Mc{D}_i$. By Lemma~\ref{lem:maxR} either the constraint~\ref{eq:prizeconst} is tight for a continent $S$; or the constraint~\ref{eq:dualconst} is tight for a vertex $v$ on the boundary of multiple disks. In the former, deactivate every demand with exactly one endpoint in $\core(S)$; pay the penalty of such demands and continue to the next iteration with the remaining active demands. In the latter, let $L_v$ denote the centers of the disks whose boundaries contain $v$. For every terminal $\tau\in L_v$ buy the shortest path w.r.t. $c_i$ connecting $v$ to $\tau$ (and so to the core containing $\tau$). Deactivate a demand if its endpoints are now connected in the solution and continue to the next iteration.
The algorithm stops when there is no active demand remaining; in which case it returns the final set of vertices bought by the algorithm.

\begin{algorithm}[!ht]
\textbf{Input:} A graph $G=(V,E)$, a set of demands $\Mc{L}$ with penalties, and a cost function $c$.

\begin{algorithmic}[1]
    \STATE Initialize $X_0=\Tau$, $\Mc{L}_1=\Mc{L}$, $c_1=c$, and $i=1$.
    \WHILE{$|\Mc{L}_i|>0$}
        \STATE Set $c_i=c[X_{i-1}\to 0]$ and construct the dual program $\Mc{D}_i$ with respect to $c_i$ and $\Mc{L}_i$.
        \STATE Construct $T_i$ by choosing an arbitrary terminal from each core.
        \STATE Let $R_i$ be the maximum radius such that putting a disk of radius $R_i$ centered at every terminal in $T_i$ is feasible w.r.t. $\Mc{D}_i$.
        \IF{the constraint~\ref{eq:prizeconst} is tight for a continent $S$}
            \STATE Set $X_i=X_{i-1}$.
            \STATE Set
            $\Mc{L}_{i+1}=\Mc{L}_i\backslash\left\{ j\in[h] | \textit{ either } s_j\in\core(S)\textit{ or } t_j\in\core(S) \right\}$.
            \label{com:gw:prize}
        \ELSE
            \STATE Find a vertex $v$ on the boundary of multiple disks for which constraint~\ref{eq:dualconst} is tight.
            \STATE Let $L_v$ denote the centers of the disks whose boundaries contain $v$.
            \STATE Initialize $X_i=X_{i-1}$.
            \FORALL{$\tau\in L_v$}
                \STATE Add the shortest path (w.r.t. $c_i$) between $\tau$ and $v$ to $X_i$. \label{com:gw:cost}
            \ENDFOR
            \STATE Set $\Mc{L}_{i+1}=\Mc{L}_i\backslash\left\{ j\in[h] | d^{c_{i+1}}(s_j,t_j)=0 \right\}$. \label{com:gw:connected}
        \ENDIF
        \STATE $i=i+1$.
    \ENDWHILE
    \STATE Output $X_{i-1}$.
\end{algorithmic}
\caption{The Prize-Collecting Steiner Forest Algorithm}
\label{alg:gw}
\end{algorithm}

We bound the objective cost of the algorithm in each iteration separately.
The following theorem shows that the fraction of $\OPT$ we incur at each iteration is proportional to the reduction in the number of cores after the iteration.

\begin{theorem}\label{thm:gwalg}
The approximation ratio of Algorithm~\ref{alg:gw} is at most $2 H_{2h}$ where $H_{2h}$ is the $(2h)^{th}$ harmonic number.
\end{theorem}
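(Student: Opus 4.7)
The plan is to bound the real cost incurred in each iteration against a residual dual lower bound and sum via a standard harmonic-series argument. Set $k_i:=|T_i|$ and $\Delta_i:=k_i-k_{i+1}$, so $k_1\le|\Tau|=2h$. I will prove two per-iteration inequalities: (I) $k_iR_i\le\OPT$, and (II) the cost paid (penalty in line~\ref{com:gw:prize} or purchased vertices in line~\ref{com:gw:cost}, measured in $c$) during iteration $i$ is at most $2R_i\Delta_i$. Combined they give
\[
\ALG \;\le\; 2\sum_i R_i\Delta_i \;\le\; 2\OPT\sum_i\frac{\Delta_i}{k_i} \;\le\; 2H_{k_1}\OPT \;\le\; 2H_{2h}\OPT,
\]
using the textbook estimate $\sum_i(k_i-k_{i+1})/k_i\le H_{k_1}$.

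For (I), the $k_i$ disks of radius $R_i$ centered at $T_i$ are non-overlapping by construction, so their union is feasible for $\Mc{D}_i$ with objective value $k_iR_i$ (Fact~\ref{fact:diskvalue}). I will then argue that the primal optimum of the residual instance (costs $c_i\le c$, demands $\Mc{L}_i\subseteq\Mc{L}$) is at most $\OPT$: any optimal solution to the original instance is a fortiori feasible for the residual instance at no larger cost (cheaper vertex costs and a subset of penalties). LP duality then yields $k_iR_i\le\OPT$.

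For (II), I split on which case of Lemma~\ref{lem:maxR} triggers. In the \emph{continent case}, the saturating disk of radius $R_i$ supports dual mass only on sets $S'$ with $\core(S)\subseteq S'\subseteq S$, summing to $R_i$; by the tight constraint~\ref{eq:prizeconst} this total equals $\pi_{\Mc{L}_i}(\core(S))$, so the penalty paid---exactly twice this quantity---is $2R_i$. Every active demand with an endpoint in $\core(S)$ is then deactivated (its other endpoint must lie outside $\core(S)$, else the demand would already be satisfied using $\Tau\subseteq X_{i-1}$), leaving $\core(S)$ with no active terminal and hence $\Delta_i\ge 1$, so cost $=2R_i\le 2R_i\Delta_i$. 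In the \emph{boundary case} with $|L_v|\ge 2$, Lemma~\ref{lem:boundarydistance} gives that each disk centered at $\tau\in L_v$ contributes $R_i-d^{c_i}(\tau,v)+c_i(v)$ to the dual sum at $v$; summing over $\tau\in L_v$ and using the saturation of~\ref{eq:dualconst} at $v$ (total equals $c_i(v)$) yields $\sum_{\tau\in L_v}d^{c_i}(\tau,v)=|L_v|R_i+(|L_v|-1)c_i(v)$. Since the $|L_v|$ purchased shortest paths share $v$, the $c_i$-cost added to $X_i$ is at most this sum minus the overcount $(|L_v|-1)c_i(v)$, i.e.\ at most $|L_v|R_i$. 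Moreover the $L_v$ terminals end up in a single zero-cost component of $G[X_i]$ (all joined to the now-zero-cost $v$), contributing at most one core to iteration $i+1$, hence $\Delta_i\ge|L_v|-1$ and cost $\le|L_v|R_i\le\frac{|L_v|}{|L_v|-1}R_i\Delta_i\le 2R_i\Delta_i$.

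The main delicate piece will be the boundary case: verifying that accounting for the overcount at the hub $v$ alone suffices (any further vertex sharing only tightens the bound) and confirming $\Delta_i\ge|L_v|-1$ from the zero-cost merger of $L_v$ through $v$. Everything else---the residual dual feasibility, the penalty identification in the continent case, and the harmonic-series collapse---is routine once this setup is in place.
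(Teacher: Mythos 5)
Your proof takes essentially the same approach as the paper: it uses weak duality to get $|T_i|R_i \le \OPT$ from the non-overlapping disk union, splits into the continent (Type I) and boundary (Type II) cases, charges $2R_i$ and $|L_v|R_i$ respectively against $\Delta_i \ge 1$ and $\Delta_i \ge |L_v|-1$, and collapses via the harmonic sum. The minor presentational differences (justifying $k_iR_i\le\OPT$ via the residual instance rather than by noting $\Mc{D}_i$ is a restriction of the full dual, and rearranging the boundary-case algebra before subtracting the $(|L_v|-1)c_i(v)$ overcount) do not change the substance of the argument.
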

\begin{proof}
Observe that at each iteration, a core is a connected component of the solution which contains an endpoint of at least one active demand. We distinguish between two types of iterations: In Type I, Line~\ref{com:gw:prize} of Algorithm~\ref{alg:gw} is executed while in Type II, Line~\ref{com:gw:connected} is executed.

Observe that a demand is deactivated either at Line~\ref{com:gw:prize} or at Line~\ref{com:gw:connected}. In the latter, the endpoints of a demand are indeed connected in the solution. Thus we only need to pay the penalty of a demand if it is deactivated in an iteration of Type I. Recall that at Line~\ref{com:gw:prize}, the penalty of $\core(S)$ is half the total penalty of demands cut by $S$. Thus the total penalty we incur at that line is exactly $2 \pi_{\Mc{L}_i}(S)$


We now break the total objective cost of the algorithm into a payment $P_i$ for each iteration $i$ as follows:
\begin{displaymath}
P_i = \left\{ \begin{array}{ll}
2\pi_{\Mc{L}_i}(S) & \textrm{for Type I iterations executing Line~\ref{com:gw:prize} with the continent $S$;}\\
c(X_i)-c(X_{i-1}) & \textrm{for Type II iterations.}
\end{array} \right.
\end{displaymath}
Recall that $|T_i|$ is the number of cores at iteration $i$.
Observe that by Fact~\ref{fact:diskvalue}, at iteration $i$ the total dual vector has value $R_i|T_i|$. By the weak duality $R_i\leq \frac{\OPT}{|T_i|}$.
For every $i\geq 1$, let $h_i=|T_i|-|T_{i+1}|$ denote the reduction in the number of cores after the iteration $i$.

\begin{claim}
$P_i\leq 2 h_i R_i$ for every iteration $i$.
\end{claim}
\begin{proof}
Fix an iteration $i$. Let $\y$ denote the union of disks of radius $R_i$ centered at $T_i$. We distinguish between the two types of the iteration:
\begin{itemize}
\item \emph{Type I}. At Line~\ref{com:gw:prize}, by deactivating all the demands crossing a core, we essentially remove that core. Thus in such an iteration $h_i=1$. The objective cost of the iteration is $2\pi_{\Mc{L}_i}(S)$. On the other hand, the constraint~\ref{eq:dualconst} is tight for $S$, i.e., $\sum_{S'\subseteq S} \y(S)=\pi_{\Mc{L}_i}(S)$. By Facts~\ref{fact:diskvalue} and \ref{fact:outside}, the radius $R_i$ equals $\sum_{S'\subseteq S} \y(S)$. Therefore the objective cost is at most $2 h_i R_i$
\item \emph{Type II}. At line~\ref{com:gw:connected}, we connect $|L_v|$ cores to each other, thus reducing the number of cores in the next iteration by at least $h_i\geq |L_v|-1$%
\forAbs{.}
\forFull{\footnote{In the special case that every endpoint in the cores become connected to the other endpoint of its demand, $h_i=|L_v|$; otherwise $h_i=|L_v|-1$.}.}
    Recall that by Lemma~\ref{lem:maxR}, $|L_v|\geq 2$ and hence $h_i\geq 1$.
    The total cost of connecting terminals in $L_v$ to $v$ is bounded by $c_i(v)$ plus for every $\tau\in L_v$, the cost of the path connecting $\tau$ to $v$ excluding $c_i(v)$. Thus $P_i\leq c_i(v)+\sum_{\tau\in L_v} (d^{c_i}(\tau,v)-c_i(v))$. Now we write the equation in Lemma~\ref{lem:boundarydistance} for every disk centered at a terminal in $L_v$:
    \begin{align*}
    |L_v| R_i  &= \sum_{\tau\in L_v} [d^{c_i}(\tau,v)-c_i(v) + \sum_{S| v\in\delta(S), \tau\in S} \y(S)  ]\\
        &= \sum_{\tau\in L_v} \left[d^{c_i}(\tau,v)-c_i(v) \right] + c_i(v) \geq P_i,
    \end{align*}
    where the last equality follows since the constraint~\ref{eq:dualconst} is tight for $v$. Since the disks are non-overlapping, by Fact~\ref{fact:outside}, $\y(S)$ is positive only if it contains a single terminal of $L_v$. This completes the proof since $P_i\leq |L_v| R_i \leq (h_i+1)R_i \leq 2 h_i R_i$. \qedhere
\end{itemize}
\end{proof}

Let $X$ be the final solution of the algorithm. Note that $|T_{i+1}|=|T_i|-h_i$ and $|T_1|\leq |\Tau|$. A simple calculation shows
\vspace{-0.25cm}
\begin{align*}
\pcsf(X)\leq \sum_i P_i &\leq \sum_i 2h_i R_i \leq 2\OPT \sum_i \frac{h_i}{|T_i|} \leq 2\OPT\cdot H_{|\Tau|}. \qedhere
\end{align*}
\end{proof}


\section{The Budgeted Steiner Tree Problem} \label{sec:budgeted}
In this section we consider the Budgeted problem in the node-weighted Steiner tree setting. Recall that for a vertex $v\in V$, we denote the prize and the cost of the vertex by $\pi(v)$ and $c(v)$, respectively. First we generalize the trimming process of Guha et al.~\cite{Guha99} which reduces the budget violation of a solution while preserving the prize-to-cost ratio. We use this process to obtain a bicriteria approximation algorithm for the rooted version in Section~\ref{sec:rootedbudgeted}.
Next, in Section~\ref{sec:unrootedbudgeted} we consider the unrooted version. By providing a structural property of near-optimal solutions, we propose an algorithm which achieves a logarithmic approximation factor without violating the budget constraint; improving on the previous result of Guha et al.~\cite{Guha99} which obtains an $O(\log^2 n)$-approximation algorithm without violation.

In what follows, for a rooted tree $T$ we assume a \emph{subtree} rooted at a vertex $v$ consists of all vertices whose path to the root of $T$ passes through $v$. The set of \emph{strict subtrees} of $T$ consists of all subtrees other than $T$ itself. Further, the set of \emph{immediate subtrees} of $T$ are the subtrees rooted at the children of the root of $T$.

\subsection{The Rooted Budgeted Problem}\label{sec:rootedbudgeted}
For a budget value $B$ and a vertex $r$, a graph is \emph{$B$-proper for the vertex $r$} if the cost of reaching any vertex from $r$ is at most $B$. The following lemma shows a bicriteria trimming method\forAbs{ (proof in the full version)}.

\begin{lemma}\label{lem:trimmingeps}
Let $T$ be a subtree rooted at $r$ with the prize-to-cost ratio $\gamma$. Suppose the underlying graph is $B$-proper for $r$ and for $\epsilon\in(0,1]$ the cost of the tree is at least $\frac{\epsilon B}{2}$. One can find a tree $T^*$ containing $r$ with the prize-to-cost ratio at least $\frac{\epsilon}{4}\gamma$ such that $\frac{\epsilon}{2} B\leq c(T^*) \leq (1+\epsilon) B$.
\end{lemma}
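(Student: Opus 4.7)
Plan. I split on $c(T)$ versus $(1+\epsilon)B$. In the easy case $c(T)\le (1+\epsilon)B$, I take $T^*:=T$: the cost lies in $[\epsilon B/2,(1+\epsilon)B]$ by the two hypotheses, $T$ already contains $r$, and the prize-to-cost ratio is $\gamma\ge \epsilon\gamma/4$ since $\epsilon\le 1$.

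In the hard case $c(T)>(1+\epsilon)B$, I generalize the trimming of Guha et al.~\cite{Guha99}. The aim is to locate a vertex $v\in T$ and a subtree $S\subseteq T$ rooted at $v$ with $c(S)\in[\epsilon B/2,\epsilon B]$ and $\pi(S)/c(S)\ge \gamma$, and then set $T^*:=P_{r,v}\cup S$, where $P_{r,v}$ is a shortest $r$--$v$ path in $G$. The $B$-properness of the graph for $r$ yields $c(P_{r,v})\le B$, so $c(T^*)\le B+\epsilon B=(1+\epsilon)B$; since $S\subseteq T^*$ we get $c(T^*)\ge c(S)\ge \epsilon B/2$. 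For the ratio, $\pi(T^*)\ge \pi(S)\ge \gamma\cdot\epsilon B/2$ while $c(T^*)\le (1+\epsilon)B\le 2B$, together giving $\pi(T^*)/c(T^*)\ge \gamma\epsilon/4$ as required.

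To produce $(v,S)$ I decompose $T$ bottom-up into vertex-disjoint rooted subtrees (``pieces''), each of cost in $[\epsilon B/2,\epsilon B]$, by a standard post-order chopping: at each vertex, greedily absorb the fragments carried up by its children into the current fragment, and emit it as a piece whenever the next absorption would push the running cost above $\epsilon B$. A simple charging argument shows the pieces partition $T$ up to a residual root piece of bounded cost; since the weighted average of prize-to-cost ratios across the collection equals $\pi(T)/c(T)=\gamma$, some emitted piece has ratio at least $\gamma$, and that piece serves as $S$.

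The main obstacle will be a \emph{heavy} vertex with $c(v)>\epsilon B$, which cannot be absorbed into any fragment of cost $\le\epsilon B$. But $B$-properness still bounds $c(v)\le B$, so $\{v\}$ alone already has cost in $(\epsilon B,B]$; the fix is to treat each heavy vertex as a standalone singleton piece in the averaging. The argument then produces either a chopped fragment meeting the required ratio, or a heavy singleton $\{v\}$ whose associated $T^*:=P_{r,v}\cup\{v\}$ has cost in $(\epsilon B,(1+\epsilon)B]$ and, by the weighted averaging against the non-heavy pieces, carries the necessary share of the prize.
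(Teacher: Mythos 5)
Your reduction to the task of finding a connected subtree $S\subseteq T$ rooted at some $v$ with $c(S)\in[\epsilon B/2,\epsilon B]$ and $\pi(S)/c(S)\ge\gamma$, and then setting $T^*=P_{r,v}\cup S$, is clean and the arithmetic checks out (using that in node-weighted graphs overlapping vertices are not double-counted, and that $B$-properness bounds $c(P_{r,v})\le B$). The easy case $c(T)\le(1+\epsilon)B$ is also handled correctly. The gap is in how you propose to \emph{produce} $S$.

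The bottom-up chopping into vertex-disjoint connected pieces, each of cost in $[\epsilon B/2,\epsilon B]$ up to one bounded residual, does not exist in general. Consider $T$ a star centered at $r$ with $c(r)=0$ and $n$ leaves each of cost $\epsilon B/4$. Any connected piece containing two or more leaves must contain $r$, so at most one piece can contain more than one leaf; all remaining pieces are singleton leaves of cost $\epsilon B/4<\epsilon B/2$. For large $n$ this is far more than one undersized residual, so the claimed partition fails. The same obstruction appears inside the chopping procedure itself: after you emit a piece at an internal vertex $v$, the as-yet-unabsorbed child residuals cannot be merged with one another into a new connected fragment without reusing $v$, and reusing $v$ destroys disjointness. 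Your ``heavy vertex'' fix has the same flaw: declaring $\{v\}$ a singleton piece when $c(v)>\epsilon B$ orphans all of $v$'s subtrees, which then cannot be grouped connectedly. Finally, even granting the partition, the averaging argument could place the entire prize concentration in the residual root piece of small cost, which you cannot use.

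The paper takes a different route that sidesteps the partition entirely. It first repeatedly deletes subtrees of $T$ as long as the remaining tree keeps ratio at least $\gamma$ and cost at least $\epsilon B/2$; this preprocessing guarantees that whatever was not deleted is ``needed.'' Then, assuming $c(T)>(1+\epsilon)B$ persists, it locates a single vertex $w$ (as the root of a lowest ``rich'' subtree, or of a minimal low-ratio subtree $T'$ — the latter must satisfy $c(T\setminus T')<\epsilon B/2$ precisely because the pruning step could not remove $T'$). In either case all immediate subtrees of $w$ have cost $<\epsilon B/2$, each has ratio $\ge\gamma$, and their total cost is $\ge\epsilon B/2$. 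A greedy knapsack pick then yields a \emph{single} subset of these immediate subtrees with total cost in $[\epsilon B/2,\epsilon B]$, which is connected through $w$ and attached to $r$ via the path of cost $\le B$. Because only one subset is chosen — not a partition — the connectivity/disjointness tension that breaks your chopping never arises. If you want to salvage your plan, you would need to replace the global partition-and-average step with this local structure: find one vertex whose immediate subtrees are uniformly small and of good ratio, and select among them.
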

\forFull{
\begin{proof}
Consider $T$ rooted at $r$. As an initial step, we repeatedly remove a subtree of $T$ if (i) the (prize-to-cost) ratio of the remaining tree is at least $\gamma$; and (ii) the cost of the remaining tree is at least $\frac{\epsilon B}{2}$. We repeat this until no such subtree can be found.

If the current cost of $T$ is at most $(1+\epsilon)B$ we are done. Suppose it is not the case.
A subtree $T'$ is \emph{rich} if $c(T')\geq \frac{\epsilon}{2} B$ and the ratio of $T'$ and all its subtrees is at least $\gamma$. Indeed the existence of a rich subtree proves the lemma.
\begin{claim}\label{claim:richsubtree}
Given a rich subtree $T'$, the desired tree $T^*$ can be found.
\end{claim}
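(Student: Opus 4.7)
The plan is to form $T^*$ by attaching the shortest $r$-to-root-of-$T'$ path (of cost at most $B$ by $B$-properness) to a suitably sized rooted piece of $T'$; richness of $T'$ guarantees that every rooted subtree we pick has prize-to-cost ratio at least $\gamma$, so the ratio bound will essentially come for free once the cost bounds are in place.

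The easy case is $c(T') \le \epsilon B$: take $T^* = T' \cup P_{r\to v'}$, where $v'$ is the root of $T'$. Then $\epsilon B/2 \le c(T') \le c(T^*) \le c(T') + B \le (1+\epsilon)B$, and $\pi(T^*) \ge \pi(T') \ge \gamma c(T') \ge \gamma\epsilon B/2$, so the ratio is at least $\gamma\epsilon/(2(1+\epsilon)) \ge \gamma\epsilon/4$.

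When $c(T') > \epsilon B$, I would walk down $T'$ from $v'$, at each step moving to the child whose rooted subtree has largest cost, and let $v^*$ be the last vertex on this walk with $c(T_{v^*}) > \epsilon B$. If $v^*$ is a leaf, then $c(v^*) > \epsilon B$ and the singleton $\{v^*\}$ is a subtree of $T'$ with $\pi(v^*) \ge \gamma c(v^*)$, so $T^* := P_{r\to v^*}$ has cost in $(\epsilon B, B]$ and prize at least $\gamma c(v^*) > \gamma\epsilon B$, giving ratio at least $\gamma\epsilon$. Otherwise let $w^*$ be the next vertex on the walk; by construction every child of $v^*$ has subtree cost at most $c(T_{w^*}) \le \epsilon B$. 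If $c(T_{w^*}) \ge \epsilon B/2$, take $T^* = T_{w^*} \cup P_{r\to w^*}$: a rich subtree of the right size, connected to $r$, with cost and ratio bounds verifying as in the easy case.

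The main obstacle is the remaining sub-case, where every child subtree of $v^*$ has cost strictly less than $\epsilon B/2$, so the walk-down leaps across the target cost window $[\epsilon B/2,\epsilon B]$. I would handle this by greedy accumulation of children's subtrees: if $\sum_w c(T_w) \le \epsilon B$ over all children $w$ of $v^*$, take $T^* = T_{v^*} \cup P_{r\to v^*}$, of total cost at most $B + \epsilon B = (1+\epsilon)B$ and at least $c(T_{v^*}) > \epsilon B$; otherwise, starting from $P_{r\to v^*}$, add children's subtrees $T_{w_j}$ one at a time and stop as soon as the accumulated added cost first reaches $\epsilon B/2$. Since each $c(T_{w_j}) < \epsilon B/2$, the final added cost then lies in $[\epsilon B/2,\epsilon B)$, so $c(T^*) \in [\epsilon B/2,(1+\epsilon)B]$. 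In both sub-cases every included $T_{w_j}$ is a subtree of $T'$ and hence has ratio at least $\gamma$, so $\pi(T^*) \ge \gamma \cdot \epsilon B/2$, which combined with $c(T^*) \le 2B$ yields the desired ratio $\ge \gamma\epsilon/4$.
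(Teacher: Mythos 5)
Your proof is correct. It also follows the same core strategy as the paper: descend to a node of $T'$ that is ``minimal'' with respect to a cost threshold, observe that $B$-properness lets you attach it to $r$ for an additive cost of at most $B$, and when no single piece lands in the target window $[\epsilon B/2,\epsilon B]$, knapsack-accumulate small children subtrees (each of cost $<\epsilon B/2$) until the total first reaches $\epsilon B/2$. The one substantive difference is the threshold: you descend to the last node $v^*$ with $c(T_{v^*}) > \epsilon B$, while the paper descends to a minimal rich subtree $T''$, i.e.\ the last node with subtree cost $\geq \epsilon B/2$. The paper's lower threshold guarantees at once that \emph{every} child subtree of $T''$ has cost $<\epsilon B/2$, so the whole argument collapses into just two cases ($C < \epsilon B/2$ or $C \geq \epsilon B/2$). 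Your higher threshold $\epsilon B$ leaves open the possibility that the heaviest child lands in $[\epsilon B/2, \epsilon B]$ or that $v^*$ is itself an expensive leaf, which is why you need the two extra sub-cases; they are handled correctly, but the paper's choice of $\epsilon B/2$ is a cleaner parameterization that avoids them. Both versions buy exactly the bound $\frac{\epsilon}{4}\gamma$, so there is no quantitative gain either way.
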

\begin{proof}
Find a rich subtree $T''\subseteq T'$ such that the strict subtrees of $T''$ are not rich, i.e., $c(T'')\geq \frac{\epsilon}{2} B$ while the cost of strict subtrees of $T''$ (if any exists) is less than $\frac{\epsilon}{2}B$. Let $C$ denote the total cost of the immediate subtrees of $T''$.
We distinguish between two cases.
\begin{itemize}
\item If $C<\frac{\epsilon}{2} B$ then we can connect the root of $T''$ directly to $r$. The cost of the resulting tree is at most $C+B\leq (1+\epsilon) B$. On the other hand, $T''$ is rich thus the prize of $T''$ is at least $\gamma\left(\frac{\epsilon}{2} B\right)$. Therefore the resulting tree has the desired ratio $\frac{\gamma \epsilon}{2(1+\epsilon)}\geq \frac{\gamma \epsilon}{4}$.
\item If $C\geq \frac{\epsilon}{2} B$, we can pick a subset of immediate subtrees of $T''$ such that their total cost is between $\frac{\epsilon}{2}B$ and $\epsilon B$. We connect these subtrees to the root by picking the path from the root of $T''$ to $r$. Using the same argument as above, one can show that the resulting tree has the desired properties. \qedhere
\end{itemize}
\end{proof}

It only remains to consider the case that no rich subtree exists. Since $T$ is not rich, the ratio of at least one subtree is less than $\gamma$. Find a subtree $T'$ such that the ratio of $T'$ is less than $\gamma$ while the ratio of all of its strict subtrees (if any exists) is at least $\gamma$. Though the ratio of $T'$ is low, we have not removed it in the initial step. Thus the cost of $T\backslash T'$ is less than $\frac{\epsilon}{2}B$. However, $c(T)>(1+\epsilon) B$ and thus the total cost of immediate subtrees of $T'$ is at least $\frac{\epsilon}{2}B$. On the other hand the cost of an immediate subtree of $T'$ is less than $\frac{\epsilon}{2}B$, otherwise it would be a rich subtree. Therefore we can pick a subset of immediate subtrees of $T'$ such that their total cost is between $\frac{\epsilon}{2}B$ and $\epsilon B$. We connect these subtrees by connecting the root of $T'$ directly to $r$. The resulting tree has the cost at most $(1+\epsilon) B$ and the prize at least $\gamma\left(\frac{\epsilon}{2} B\right)$ which completes the proof.
\end{proof}
}

Moss and Rabani~\cite{MossRabani} give an $O(\log n)$-approximation algorithm for the Budgeted problem which may violate the budget by a factor of two.
Using Lemma~\ref{lem:trimmingeps} one can trim such a solution to achieve a trade-off between the violation of budget and the approximation factor \forAbs{(proof in the full version)}.

\begin{theorem}\label{thm:rootedeps}
For every $\epsilon\in(0,1]$ one can find a subtree $T\subseteq G$ in polynomial time such that $c(T)\leq (1+\epsilon) B$ and the total prize of $T$ is $\Omega(\frac{\epsilon^2}{\log n})$ fraction of $\OPT$.
\end{theorem}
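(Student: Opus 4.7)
The plan is to combine the $(2, O(\log n))$-bicriteria algorithm of Moss and Rabani with the trimming step of Lemma~\ref{lem:trimmingeps}, handling the hypothesis of that lemma via a case split. First I would preprocess the input by restricting to the induced subgraph $G'$ on the vertex set $V' = \{v \in V : d^c(r,v) \le B\}$. Any feasible rooted solution of cost at most $B$ lives entirely in $V'$, so $\OPT$ is unchanged by this restriction; moreover $G'$ is $B$-proper for $r$ by construction, which is precisely the setting required by Lemma~\ref{lem:trimmingeps}.

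Next I would run the Moss--Rabani algorithm on $G'$ to obtain a subtree $T'$ rooted at $r$ with $c(T') \le 2B$ and $\pi(T') = \Omega(\OPT / \log n)$. This tree has prize-to-cost ratio $\gamma = \pi(T')/c(T') = \Omega\bigl(\OPT/(B \log n)\bigr)$, but may itself violate the budget, so a trimming step is needed to bring its cost down to $(1+\eps)B$ without losing too much of its prize.

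Now I would split into two cases according to whether $c(T')$ meets the lower-bound hypothesis of Lemma~\ref{lem:trimmingeps}. If $c(T') < \eps B/2$, then $T'$ already satisfies $c(T') \le (1+\eps)B$, and its prize is already $\Omega(\OPT/\log n) \ge \Omega(\eps^2 \OPT/\log n)$, so returning $T'$ finishes this case. Otherwise $c(T') \ge \eps B/2$ and Lemma~\ref{lem:trimmingeps} applies, yielding a tree $T^*$ with $\eps B/2 \le c(T^*) \le (1+\eps)B$ and prize-to-cost ratio at least $(\eps/4)\gamma$. Multiplying these two bounds gives
\[
\pi(T^*) \;\ge\; \frac{\eps}{4}\gamma \cdot \frac{\eps B}{2} \;=\; \Omega\!\left(\frac{\eps^2}{\log n}\right)\OPT,
\]
which is the desired guarantee.

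The only subtlety is that Lemma~\ref{lem:trimmingeps} has two preconditions ($B$-properness and a minimum-cost lower bound on the input tree), and these must both be arranged before invoking it. The first is handled once and for all by the preprocessing step, and the second is handled by the trivial case where $T'$ is already short enough to return without modification. Beyond these book-keeping steps no further work is required; in particular the polynomial-time claim is immediate since every ingredient (shortest-path pruning to form $G'$, the Moss--Rabani algorithm, and the trimming procedure of Lemma~\ref{lem:trimmingeps}) runs in polynomial time.
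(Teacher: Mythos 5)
Your proposal is essentially the same as the paper's proof: prune to make the graph $B$-proper, run Moss--Rabani to get a $(2, O(\log n))$-bicriteria tree, then apply Lemma~\ref{lem:trimmingeps} and multiply the resulting ratio and cost bounds. The only cosmetic difference is that you case-split at $c(T') < \eps B/2$ while the paper case-splits at $c(T') \le (1+\eps)B$ (noting $(1+\eps)B \ge \eps B/2$ always, so the lemma's precondition holds automatically in the remaining case); both handle the bookkeeping correctly.
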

\forFull{
\begin{proof}
First we make the graph $B$ proper for the root $r$ by simply discarding the vertices which are farther than $B$ from $r$. Note that these vertices cannot be a part of an optimal solution.
By Theorem~\ref{thm:mossbudgeted}, we can find a tree $T'$ with $\pi(T')\geq \frac{\OPT}{O(\log n)}$ and $c(T')\leq 2B$. Suppose the cost of $T'$ is more than $(1+\epsilon)B$; otherwise we are done.

Let $\gamma(T')$ denote the prize to cost ratio of $T'$. Observe that $\gamma(T')=\frac{\pi(T')}{c(T')}\geq \frac{\OPT}{O(\log n) \cdot 2B}$. By Lemma~\ref{lem:trimmingeps} we can trim $T'$ to obtain a subtree $T$ such that
\begin{itemize}
\item The prize to cost ratio of $T$ is $\gamma(T)\geq \frac{\epsilon}{4}\gamma(T')\geq \frac{\epsilon \OPT}{O(\log n) B}$.
\item The cost of $T$ is sandwiched between $\frac{\epsilon}{2}B$ and $(1+\epsilon)B$.
\end{itemize}
Therefore the cost of $T$ does not violate the budget by much and $\pi(T)$ is at least
\begin{center}
$\pi(T)\geq \gamma(T)\left(\frac{\epsilon}{2}B\right) \geq \frac{\epsilon^2 \OPT}{O(\log n)}$. \qedhere
\end{center}
\end{proof}
} 

\subsection{The Unrooted Budgeted Problem}\label{sec:unrootedbudgeted}
We prove a stronger variant of Lemma~\ref{lem:trimmingeps} for the unrooted version. We show that if no single vertex is too expensive, one does not need to violate the budget at all. The analysis is similar to that of Lemma~\ref{lem:trimmingeps}. \forFull{For the sake of completeness, we have presented the proof in Appendix~\ref{sec:omitted}.}

\begin{lemma}\label{lem:trimunrooted}
Let $T$ be a tree with the prize to cost ratio $\gamma$. Suppose $c(T)\geq \frac{B}{2}$ and the cost of every vertex of the tree is at most $\frac{B}{2}$ for a real number $B$. One can find a subtree $T^*\subseteq T$ with the prize to cost ratio at least $\frac{\gamma}{4}$ such that $\frac{B}{4}\leq c(T^*) \leq B$.
\end{lemma}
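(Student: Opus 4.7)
The plan is to mirror the proof of Lemma~\ref{lem:trimmingeps}, adapted to the unrooted setting: we no longer pay for a path to a root, but we also cannot extend the tree from outside. If $c(T) \le B$ we take $T^* = T$, so assume $c(T) > B$ and root $T$ at an arbitrary vertex. As in the rooted proof, the first step is an initial trimming phase: iteratively remove any subtree $S$ whose removal still leaves a tree of ratio at least $\gamma$ and cost at least $B/4$. Let $T'$ denote the resulting tree; then $\pi(T')/c(T') \ge \gamma$ and $c(T') \ge B/4$, and if $c(T') \le B$ we return $T^* = T'$.

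So assume $c(T') > B$, and call a subtree $U \subseteq T'$ \emph{rich} if $c(U) \ge B/4$ and both $U$ and all of its strict subtrees have ratio at least $\gamma$. Case A: a rich subtree exists. Let $T''$ be a minimal one; by richness and minimality, its strict subtrees all have ratio $\ge \gamma$ and cost $< B/4$. If $c(T'') \le B$ we return $T^* = T''$. Case B: no rich subtree exists. Since $c(T') \ge B/4$ and $\pi(T')/c(T') \ge \gamma$ but $T'$ is not rich, some strict subtree of $T'$ has ratio $< \gamma$; choose a minimal such $T_{\mathrm{low}}$, so its strict subtrees have ratio $\ge \gamma$. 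The trimming phase could not have removed $T_{\mathrm{low}}$---and removing a low-ratio subtree cannot drop the remaining ratio below $\gamma$---so the cost condition must have failed, giving $c(T_{\mathrm{low}}) > c(T') - B/4 > 3B/4$. Moreover, every strict subtree of $T_{\mathrm{low}}$ has ratio $\ge \gamma$ (as a strict subtree of $T'$) with all further sub-ratios $\ge \gamma$ by the same reasoning, so any such subtree with cost $\ge B/4$ would be rich, contradicting Case B; hence all strict subtrees of $T_{\mathrm{low}}$ have cost $< B/4$.

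In the remaining sub-cases (Case A with $c(T'') > B$, or Case B), I aggregate: let $W \in \{T'', T_{\mathrm{low}}\}$ and let $v$ be its root. By hypothesis $c(v) \le B/2$, and by construction $c(W) > 3B/4$ while each immediate subtree $S_i$ of $W$ has $c(S_i) < B/4$ and $\pi(S_i)/c(S_i) \ge \gamma$. Starting from $\{v\}$, add the $S_i$'s to $T^*$ in arbitrary order until the accumulated cost first reaches $3B/4$; since each increment is $< B/4$, the final cost $c^*$ lies in $[3B/4, B)$. Writing $I$ for the set of indices added, I get $\pi(T^*) \ge \sum_{i \in I} \pi(S_i) \ge \gamma \sum_{i \in I} c(S_i) = \gamma(c^* - c(v)) \ge \gamma(3B/4 - B/2) = \gamma B/4$, so $\pi(T^*)/c^* \ge \gamma B/4 / B = \gamma/4$, as required.

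The main obstacle is the no-rich Case B, where one must recover the structural bound $c(T_{\mathrm{low}}) > 3B/4$ purely from the termination of the trimming phase. This is what makes the aggregation succeed: matching Phase 1's cost threshold precisely to $B/4$ yields the slack $3B/4 - c(v) \ge B/4$ needed to absorb the root vertex's possibly bad ratio while preserving the target ratio $\gamma/4$---replacing the "connect to root" trick used in the rooted variant.
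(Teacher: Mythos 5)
Your proof is correct and follows essentially the same approach as the paper: an initial trimming phase with cost threshold $B/4$, the case split on whether a rich subtree exists, the bound $c(T_{\mathrm{low}})>3B/4$ derived from the trimming termination condition, and a final greedy aggregation of immediate subtrees (each of cost $<B/4$) together with a root of cost $\le B/2$. The only differences are cosmetic: you aim the aggregation at total cost $\ge 3B/4$ with the root included, while the paper targets a subset of immediate subtrees with cost in $[B/4,B/2]$ and then attaches the root, and your parenthetical ``(as a strict subtree of $T'$)'' mis-states the reason strict subtrees of $T_{\mathrm{low}}$ have ratio $\ge\gamma$ (it is the minimality of $T_{\mathrm{low}}$, which you in fact stated just before), so that slip does not affect correctness.
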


One may use arguments similar to that of Theorem~\ref{thm:rootedeps} to derive an $O(\log n)$-approximation algorithm from Lemma~\ref{lem:trimunrooted} when the cost of a vertex is not too big. On the other hand if the cost of a vertex is more than half the budget, we can guess that vertex and try to solve the problem with the remaining budget. However, one obstacle is that this process may need to be repeated, i.e., the cost of another vertex may be more than half  the remaining budget. Thus we may need to continue guessing many vertices in which case connecting them in an optimal manner would not be an easy task. The following theorem shows indeed guessing one vertex is sufficient if one is willing to lose an extra factor of two in the approximation guarantee.

\begin{theorem}\label{thm:logunrooted}
The unrooted budgeted problem admits an $O(\log n)$-approximation algorithm which does not violate the budget constraint.
\end{theorem}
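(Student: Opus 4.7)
My plan is to enumerate all $n$ vertices of $G$ as guesses for the costliest vertex $v^*$ of an unknown optimal tree $T^*$, and return the best candidate produced across all guesses. I will analyze the correct guess; set $C := c(v^*)$. Because $v^*$ is costliest in $T^*$ and $c(T^*) \leq B$, every other vertex of $T^*$ has cost at most $\min(C, B-C)$, so $T^*$ survives in the subgraph $G'$ of $G$ induced on $\{v^*\} \cup \{u : c(u) \leq \min(C, B-C)\}$. I will handle the two regimes $C \leq B/2$ and $C > B/2$ separately.

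When $C \leq B/2$, every vertex of $G'$ has cost at most $B/2$. I will invoke Theorem~\ref{thm:rootedeps} on $G'$ rooted at $v^*$ with $\epsilon = 1$, obtaining a tree $T'$ with $c(T') \leq 2B$ and $\pi(T') = \Omega(\OPT / \log n)$. Since every vertex of $T'$ has cost at most $B/2$, Lemma~\ref{lem:trimunrooted} applies with budget $B$, yielding a subtree $T^\star$ of cost at most $B$ whose prize-to-cost ratio is within a constant factor of that of $T'$; a short calculation then gives $\pi(T^\star) = \Omega(\OPT / \log n)$, as required.

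When $C > B/2$, the vertex $v^*$ is the unique vertex of $G'$ whose cost exceeds $B/2$; every other vertex has cost at most $B - C < B/2$. I build $T'$ as above and decompose $T' \setminus \{v^*\}$ into its connected components $U_1, \dots, U_k$. The candidates I consider are: (a) the singleton $\{v^*\}$ of cost $C \leq B$; (b) for each $U_i$ of cost at least $B/2$, the subtree returned by applying Lemma~\ref{lem:trimunrooted} to $U_i$ with budget $B$ (whose hypotheses hold since all vertices of $U_i$ cost less than $B/2$); (c) each $U_i$ of cost at most $B$ taken standalone; and (d) $v^*$ together with a constant-factor approximation to the knapsack that maximizes $\sum_{i \in J} \pi(U_i)$ subject to $\sum_{i \in J} c(U_i) \leq B - C$. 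The algorithm returns the best such candidate across all guesses of $v^*$.

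To establish the $O(\log n)$ guarantee in the hard regime I will split the analysis on whether $\pi(v^*) \geq \pi(T')/2$: in the first case, candidate (a) already delivers $\Omega(\pi(T'))$ prize; in the second, $\sum_i \pi(U_i) \geq \pi(T')/2$, and I will argue that either the best individual component (after trimming via (b) or used standalone via (c)) or the knapsack bundle (d) captures a constant fraction of this sum. The principal obstacle I expect is the sub-regime where $C$ is close to $B$ so that the residual budget $B - C$ is tiny and few components fit into the knapsack; there I plan to argue that any oversized component is itself a feasible standalone solution via (c) (its vertices cost below $B/2$ and the whole $T'$ costs $\leq 2B$, so trimming via (b) reduces it to cost $\leq B$ while preserving a constant fraction of its ratio), and that a greedy knapsack combined with the bound $\max_i \pi(U_i)/c(U_i) \geq \tfrac{1}{2B}\sum_i \pi(U_i)$ delivers the remaining prize. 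The extra factor of two flagged in the paragraph preceding the theorem is absorbed into the constant hidden by the $O(\log n)$.
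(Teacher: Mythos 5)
Your easy case ($C\leq B/2$) is sound and matches the paper's ``flat tree'' case. The gap is in the hard regime $C>B/2$. There you run Moss--Rabani once, rooted at $v^*$ with budget $B$, and try to extract a feasible tree from $T'$ via per-component candidates and a knapsack with budget $B-C$ over the components of $T'\setminus\{v^*\}$. This cannot work when $B-C$ is small relative to $B$. First, even your own stated claim---that some candidate recovers a constant fraction of $\sum_i\pi(U_i)$---is false: with $\sum_i c(U_i)$ as large as $2B-C$, a knapsack with budget $B-C$ recovers only a $\Theta\!\left(\frac{B-C}{B}\right)$ fraction of the total prize in the uniform-ratio case, and if all components have cost $<B/2$ the trimming lemma never fires. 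Second, and more fundamentally, $T'$ is a black box with only the guarantee $\pi(T')\geq \OPT/O(\log n)$, $c(T')\leq 2B$; it may fill its $2B$ budget with many low-ratio vertices and omit exactly the high-ratio, low-cost vertices that the optimal tree attaches to $v^*$ within its residual budget $B-C$. Concretely, take $B=n$, $c(v^*)=n-1$, $\pi(v^*)=0$, and $4n$ leaves hanging off $v^*$: $2n$ ``heavy'' leaves of cost $1/2$ and prize $p$, and $2n$ ``light'' leaves of cost $1/(2n)$ and prize $p\log n$. Then $\OPT=2np\log n$ (via $v^*$ plus all light leaves), but $T'$ consisting of $v^*$ plus all $2n$ heavy leaves has prize $2np$ and cost $2n-1$ and thus satisfies the Moss--Rabani guarantee; your candidates extract at most $O(p\log n)$ from any guess, an $\Omega(n)$ loss.

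What the paper does differently, and what you are missing, is the structural Claim~\ref{claim:optstructure}: either the best \emph{flat} tree (every vertex $\leq B/2$) or the best \emph{saddled} tree (one vertex $x$ of cost $>B/2$, every other vertex $\leq (B-c(x))/2$) has prize $\geq\OPT/2$, proved by splitting $T^*$ at its second-costliest vertex. The factor-$2$ tightening to $(B-c(x))/2$---stronger than your $\min(C,B-C)$ threshold---is exactly what makes Lemma~\ref{lem:trimunrooted} applicable at the residual budget. The paper then, in the saddled case, \emph{re-runs} Moss--Rabani in a new instance with $c(x)$ set to zero and budget $B-c(x)$, rather than trying to extract a small-budget solution from a $2B$-budget one. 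Because Moss--Rabani is invoked at the \emph{right} budget, its output can be trimmed into budget without an extra $\Theta(B/(B-C))$ loss. To repair your proof you would need to prove (and use) this flat/saddled dichotomy and replace the knapsack step in the hard case with a fresh budget-$(B-C)$ invocation of the rooted algorithm on the reduced instance.
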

\begin{proof}
We define two classes of subtrees: the \emph{flat} trees and the \emph{saddled} trees. A tree is flat if the cost of every vertex of the tree is at most $\frac{B}{2}$. For a tree $T$, let $x$ be the vertex of $T$ with the largest cost. The tree $T$ is saddled if $c(x)>\frac{B}{2}$ and the cost of every other vertex of the tree is at most $\frac{B-c(x)}{2}$. Let $T^*_f$ denote the optimal flat tree, i.e., a flat tree with the maximum prize among all the flat trees with the total cost at most $B$. Similarly, let $T^*_s$ denote the optimal saddled tree.

The proof is described in two parts. First we show the prize of the best solution between $T^*_f$ and $T^*_s$ is indeed in a constant factor of $\OPT$\forAbs{(see the following claim, proof in the full version)}.
Next, we show by restricting the optimum to any of the two classes, an $O(\log(n))$-approximation solution can be found in polynomial time. Therefore this would give us the desired approximation algorithm.

\begin{claim}\label{claim:optstructure}
Either $\pi(T^*_f)\geq \frac{\OPT}{2}$ or $\pi(T^*_s)\geq \frac{\OPT}{2}$.
\end{claim}
\forFull{
\begin{proof}
Let $T^*$ denote the optimal tree. If $T^*$ consists of only one vertex then clearly it is either a flat tree or a saddled tree and we are done. Now assume that $T^*$ is neither flat nor saddled and it has at least two vertices. Let $x$ and $y$ denote the vertices with the maximum cost and the second maximum cost in $T^*$, respectively. Since $T^*$ is not flat we have $c(x)>\frac{B}{2}$ and $c(y)\leq \frac{B}{2}$. It is neither saddled thus $c(y)> \frac{B-c(x)}{2}$. Observe that the cost of any other vertex of $T^*$ is at most $\frac{B-c(x)}{2}$. Consider the path between $y$ and $x$ in $T^*$. Let $e$ denote the edge of the path which is adjacent to $y$. Removing $e$ from $T^*$ results in the two subtrees $T_y$ and $T_x$ containing $y$ and $x$, respectively. The cost of every vertex in $T_y$ is at most $c(y)\leq \frac{B}{2}$, thus $T_y$ is flat. On the other hand the cost every vertex in $T_x$ except $x$ is at most $\frac{B-c(x)}{2}$, thus $T_x$ is saddled.
This completes the proof since one of the subtrees has at least half  the optimal prize $\pi(T^*)$.
\end{proof}
}

Now we only need to restrict the algorithm to flat trees and saddled trees. Indeed we can reduce the case of saddled trees to flat trees. We simply guess the maximum-cost vertex $x$ (by iterating over all vertices). We form a new instance of the problem by reducing the budget to $B-c(x)$ and the cost of $x$ to zero. The cost of every other vertex in $T^*_s$ is at most half  the remaining budget, thus we need to look for the best flat tree in the new instance. Therefore it only remains to find an approximation solution when restricted to flat trees.

We use Lemma~\ref{lem:trimunrooted} to find the desired solution for flat trees. A vertex with cost more than half  the budget cannot be in a flat tree, thus we remove all such vertices. We may guess a vertex of the best solution and by using the algorithm of Moss and Rabani~\cite{MossRabani}
\forFull{(see Theorem~\ref{thm:mossbudgeted} in the Appendix)}
we can find an $O(\log n)$-approximation solution which may use twice the budget. Let $T$ be the resulting tree with the total prize $P$. If $c(T)\leq B$ we are done. Otherwise by Lemma~\ref{lem:trimunrooted} we can trim $T$ to obtain a tree with the cost at most $B$ and the prize at least $\frac{P}{32}$ which completes the proof.
\end{proof}


\bibliographystyle{plain}

\forFull{
\appendix
\section{Hardness of Net Worth}\label{sec:nw-hardness}
Here we present the hardness result for the rooted NW and directed NW given in Feigenbaum et al.~\cite{papa01} with slight modifications. We show that NW is NP-hard to approximate within any finite factor when restricted to the case of bounded degree graphs.

\begin{theorem}
For any $\epsilon$, $0<\epsilon<1$, it is NP-hard to approximate\footnote{Algorithm $A$ approximates function $f$ within ratio $\epsilon$, iff for every input instance $x$, $\epsilon f(x)\leq A(x)\leq f(x)/\epsilon$. Since NW is a maximization problem, we can assume that $A(x)\leq f(x)$.}
the rooted, whether directed or undirected, net worth problem within a ratio $\epsilon$.
\end{theorem}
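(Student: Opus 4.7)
The plan is to reduce from a standard NP-complete decision problem to rooted NW in a way that creates a gap between positive and zero optimum value, and then observe that any finite approximation ratio distinguishes these two cases. Following Feigenbaum et al.~\cite{papa01}, I would take as source a natural NP-hard problem such as the decision version of Hamiltonian Path in cubic graphs or 3-SAT with bounded variable occurrence, both of which remain NP-complete in the bounded-degree setting required by the theorem.

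Given an instance $I$ of the source problem, I would construct in polynomial time a bounded-degree graph $G_I$ with vertex costs and prizes, together with a root $r$ satisfying $c(r) = \pi(r) = 0$. Since the singleton subtree $\{r\}$ is always feasible with net worth exactly $0$, we have $\mathrm{OPT}(G_I) \geq 0$ automatically. The gadget should be designed so that (i) if $I$ is a YES instance, some subtree achieves net worth at least $L$ for some fixed $L > 0$; and (ii) if $I$ is a NO instance, every subtree has net worth at most $0$. Intuitively, the positive-prize ``reward'' vertices are placed behind a chain of positive-cost ``assignment'' vertices, with costs calibrated so that the total cost exactly cancels the collectable prize, unless the chosen chain encodes a valid certificate for $I$, in which case a surplus of $L$ is accumulated. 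Bounded degree is maintained throughout by the standard technique of replacing any high-degree vertex by a balanced binary tree of auxiliary zero-cost, zero-prize vertices, preserving connectivity and net-worth accounting with only polynomial blow-up.

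The inapproximability argument then follows by a short two-case analysis. Suppose a polynomial-time algorithm $A$ achieves ratio $\epsilon \in (0,1)$. On a YES instance, $\mathrm{OPT}(G_I) \geq L$ forces $A(G_I) \geq \epsilon L > 0$; on a NO instance, $\mathrm{OPT}(G_I) = 0$ forces $A(G_I) = 0$, since no feasible tree exceeds the optimum. Therefore $A$ returns a strictly positive value if and only if $I$ is a YES instance, yielding a polynomial-time decision procedure for the source problem; since $\epsilon$ was arbitrary, no finite-ratio approximation is possible. The directed variant follows by orienting the edges of the gadget appropriately, keeping the reward reachable only through the encoding path.

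The main obstacle is calibrating the cost-to-prize balance of the assignment gadget so that the NO-case optimum is exactly $0$ and not merely a small positive value; concretely, the design must rule out ``cheating'' subtrees that use a partial or inconsistent selection of assignment vertices to extract positive net worth via a shortcut. This is precisely the role of the original Feigenbaum et al.\ gadget, and once it is in place the remaining steps---bounded-degree enforcement and the two-case analysis above---are routine.
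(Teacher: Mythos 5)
Your high-level plan matches the paper's: reduce from 3-SAT (the paper, following Feigenbaum et al., does exactly this) and build a gadget in which the entry cost of reaching the reward layer is calibrated to exactly cancel the collectable prize, with a positive surplus available only when the chosen subtree encodes a satisfying assignment. Your variant of the gap — making $\pi(r)=0$ so that NO instances have $\OPT=0$ and YES instances have $\OPT\geq L>0$ — is a valid (and slightly cleaner) way to finish than the paper's $\epsilon$ vs.\ $1+\epsilon$ gap, since the footnote's definition $\epsilon f(x)\leq A(x)\leq f(x)/\epsilon$ collapses to $A(x)=0$ whenever $f(x)=0$. You also correctly flag the bounded-degree fix via replacing high-degree fan-out with a binary tree of zero-cost, zero-prize vertices, which the paper's own construction (where $r'$ has degree $2n+1$) actually needs and does not spell out.

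However, the proposal stops short of a proof precisely where the proof is nontrivial: you never specify the gadget. You describe its intended shape and explicitly defer the calibration (``This is precisely the role of the original Feigenbaum et al.\ gadget'') and the non-cheating argument to a citation, while also modifying it ($\pi(r)=0$) without verifying the modification preserves the NO-case bound. The paper does the work you skip: it introduces the dummy clauses $x\vee\bar{x}$ so any reward-collecting subtree is forced to pick at least one literal per variable, it chooses $K\geq n+1$ so that including fewer than all $m$ clause vertices cannot yield a surplus, and it bounds the number of literal vertices to at most $n$ so the chosen literals must form an assignment. Without these concrete choices, the claim that ``the total cost exactly cancels the collectable prize unless the chain encodes a valid certificate'' is an assertion, not an argument — and in particular your worry about ``cheating subtrees'' is exactly the case the paper's parameter choices close off. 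As written, this is a correct proof outline with the substantive combinatorial content left to a black box.
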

\begin{proof}
Given an instance $I$ of 3-SAT, we make an instance $J$ of NW problem such that: (i) if $I$ is a yes-instance (i.e., it is satisfiable), then an $\epsilon$-approximation answer to $J$ is strictly greater than $\epsilon$; and (ii) if $I$ is a no-instance the the optimal answer to $J$ is at most $\epsilon$.
Let $n$ and $m$ be the number of variables and clauses in $I$, respectively. Without loss of generality we assume that for every variable $x$ there is a clause $x \vee \bar{x}$ in $I$, thus $m\geq n+1$. We make the instance $J$ with four layers of vertices as follows:
\begin{itemize}
\item In the top layer, we put the root $r$ with prize $\pi(r)=\epsilon$.

\item In the second layer, we put a vertex $r'$ with prize zero, connected to $r$ via an edge of cost $m K-(n+1)-m$ for a fix $K\geq n+1$.

\item The third layer contains $2n$ vertices each for every literal in $I$, all with prize zero and connected to $r'$ via edges of unit cost.

\item The last layer contains $m$ vertices for every clause, all with prize $K$ and connected to the vertices corresponding to the literals it contains via edges of unit cost.
\end{itemize}
In the case of directed NW, we direct all the edges from top to bottom.
We claim that if $I$ is satisfiable then $NW(J)\geq 1+\epsilon$, otherwise $NW(J)\leq \epsilon$. Note that in the former an $\epsilon$-approximation algorithm would give us a solution with net worth at least $\epsilon(1+\epsilon)>\epsilon$  and in the latter it would give us a solution with net worth at most $\epsilon$, thus it can distinguish the satisfiability of $I$.

To get a solution with net worth more than $\epsilon$, we have to buy the edge $(r,r')$ and thus incurring a big cost. In order to include a subset $S$ of vertices of the fourth layer, we need to buy at least one edge between layer two and layer three, and $|S|$ edges connecting layer three to $S$. Thus the maximum net worth we could get is
\begin{eqnarray*}
\epsilon+ |S| K-1-|S|-(m K - (n+1) -m) &=& \epsilon+ |S|(K-1)-m(K-1)+n \\
        &=& \epsilon+(|S|-m)(K-1)+n\\
        &\leq& \epsilon+ n(|S|-m+1)
\end{eqnarray*}
where the last inequality holds since $|S|-m\leq 0$ and $K\geq n+1$.
Therefore to have a net worth strictly more than $\epsilon$, we need to include all the vertices in the fourth layer. Observer that the maximum possible net worth is $\epsilon+n$.
Recall that for every variable $x$ there is a clause $x \vee \bar{x}$. To include the vertex corresponding to this clause, we need to include at least one vertex corresponding to a literal of $x$. On the other hand, we cannot include more than $n$ vertices of the third layer or otherwise we could not achieve a net worth more than $\epsilon$. This shows that for every variable $x$, the vertex $r'$ would be connected to exactly one of the vertices corresponding to $x$ and $\bar{x}$. Therefore a solution of net worth more than $\epsilon$ corresponds to a satisfying assignment, in fact, the net worth of such a solution would be exactly $1+\epsilon$. This completes the proof.
\end{proof}

\section{Reductions for Quota Problems}\label{sec:reductions}
Here we present two important reductions that were missing in the literature.
More specifically, we show that rooted and unrooted $k$-MST and their $k$-Steiner Tree versions
are all equivalent and indeed equivalent to the Quota problem.
(That these are simpler than the latter is easy.)
These results improve the approximation ratio of $k$-Steiner Tree from $4$ to $2$.
Ravi et al.~\cite{RaviEtAl} had provided a reduction from $k$-Steiner Tree to $k$-MST
losing a factor $2$, whereas Chudak et al.~\cite{CRW} had conjectured the presence
of a $(2+\eps)$-approximation
algorithm while presenting one with approximation ratio of $5$.



This appendix deals with four special cases of the quota node-weighted Steiner tree problem.
We first claim that the rooted $k$-Steiner tree problem is equivalent to the quota problem,
with a factor of $1+\eps$ for a polynomially small $\eps$.
That the former is a special case of the latter can be observed easily
by setting vertex prizes to $0$ and $1$ for Steiner and terminal nodes, respectively,
and looking for a prize of at least $k$.
To establish the other direction of the reduction,
given a graph $G$ with prize $\pi$ and cost $c$ on its vertices,
as well as target prize value $P$,
we produce an instance of $k$-Steiner tree as follows.
We assume all vertices of $G$ are Steiner vertices and connect
a vertex $u$ to $q(u)=\lceil\frac{n \pi(u)}{\eps P}\rceil$ new terminal vertices of cost zero.
In this instance we let $k = \lfloor n/\eps\rfloor$.
Clearly any solution to the quota instance turns into a solution of $k$-Steiner tree
if one collects the terminals immediately connected to the solution vertices.
Next consider a solution to the $k$-Steiner tree instance.
We can assume without loss of generality that either none or all the terminals
connected to one node are in the solution.
The solution to the quota instance simply includes all Steiner nodes whose all adjacent terminals are picked in the k-Steiner tree instance. For such nodes we have $\sum_u q(u) \geq k$. Note that there are at most $n$ such Steiner nodes, and for each of them, say $u$, we have $q(u)\cdot\frac{\epsilon P}{n} < \frac{\epsilon P}{n} + \pi(u)$. Therefore, we get $(\frac{\epsilon P}{n}) \sum_u q(u) < \epsilon P + \sum_u \pi(u)$. However, the solution guarantee (in the k-MST instance) is that the left-hand side is at least $(\frac{\epsilon P}{n})k > (\frac{\epsilon P}{n}) (\frac{n}{\epsilon} - 1) = P - \frac{\epsilon P }{n}$. Putting these two together and noting that $n \geq 1$, we obtain $\sum_u \pi(u) > P(1 - 2\epsilon)$.

The following two theorems show the other three problems
are equivalent to $k$-Steiner tree (and hence quota problem).

\begin{theorem}\label{thm:MST_rooted_reduction}
Let $\alpha$, $0<\alpha<1$, be a constant. The following two statements are equivalent:
\begin{enumerate}[i]
\item There is an $\alpha$-approximation algorithm for the rooted $k$-\MST{} problem.
\item There is an $\alpha$-approximation algorithm for the unrooted $k$-\MST{} problem.
\end{enumerate}
\end{theorem}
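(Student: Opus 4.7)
My plan is to treat the two directions separately and to exploit, for the harder direction, the same zero-cost attachment trick that the paper already uses for the reduction (iii) $\Rightarrow$ (ii).

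The direction (i) $\Rightarrow$ (ii) will be the easy one. Given an $\alpha$-approximation algorithm for rooted $k$-\MST{} and an unrooted instance on $G=(V,E)$, I will simply iterate over every vertex $v\in V$, run the rooted algorithm using $v$ as the designated root, and return the cheapest of the $|V|$ trees produced. Letting $T^{\ast}$ denote an optimal unrooted $k$-\MST{} and $v^{\ast}$ any of its vertices, the rooted optimum with root $v^{\ast}$ is at most $c(T^{\ast})$, so the $\alpha$-guarantee transfers verbatim. I do not anticipate any obstacle here.

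The direction (ii) $\Rightarrow$ (i) will require a cost-preserving reduction that forces the designated root into every reasonable solution of the unrooted instance. Given a rooted instance on $G=(V,E)$ with $|V|=n$ and root $r$, I plan to construct $G'$ by attaching $n$ new zero-cost pendant vertices $d_1,\ldots,d_n$ to $r$ (with zero-cost connecting edges in the edge-weighted variant), and then invoke the unrooted $\alpha$-approximation on $G'$ with target $k' = k+n$. The output tree in $G'$, restricted to $V$, will be returned as the rooted solution. The correctness argument I will spell out is: any tree in $G'$ of size at least $k+n$ must contain $r$, because otherwise none of the $d_i$ can be reached and the tree is confined to the $n-1$ remaining original vertices, contradicting its size. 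Once $r$ is included, every $d_i$ can be appended for free, so feasible $(k+n)$-vertex trees in $G'$ are in cost-preserving bijection with $k$-vertex trees in $G$ containing $r$, and consequently the two optima coincide and the $\alpha$-factor is preserved.

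The main obstacle, as far as I can see, is purely bookkeeping: I need to confirm that the bijection goes through in both the node-weighted and edge-weighted variants, and I need to handle the trivial corner cases ($k=1$, or $r$ already adjacent to zero-cost vertices) so that the size-forcing argument is airtight. None of these require new ideas, and the overall argument will be short and structurally parallel to the (iii) $\Rightarrow$ (ii) reduction already present in the paper.
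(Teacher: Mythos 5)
Your proposal is correct and takes essentially the same approach as the paper: the direction (i)~$\Rightarrow$~(ii) is handled by enumerating all choices of root, and (ii)~$\Rightarrow$~(i) by attaching $n$ zero-cost pendant vertices to $r$ and calling the unrooted algorithm with $k'=k+n$, which forces $r$ into any $k'$-vertex tree and makes the optima coincide. The reasoning matches the paper's proof of Theorem~\ref{thm:MST_rooted_reduction} step for step.
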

\begin{proof}
We note that by running the rooted $k$-\MST{} for every vertex, (i) immediately implies (ii). To prove that (ii) implies (i), we give a cost-preserving reduction from rooted variant to unrooted variant. Let $<G=(V,E),r,k>$ be an instance of the rooted $k$-\MST{} and let $n=|V|$. We add $n$ vertices to $G$, all connected by edges of cost zero to $r$. Let $k'=k+n$ and consider the solution to (unrooted) $k'$-\MST{} on the new graph. Since $k'>n-1$, a subtree of size $k'$ has to include $r$. Thus we can assume that there exist an optimal solution which includes all the $n$ extra vertices plus a minimum-cost subtree of size $k$ rooted at $r$. Hence the reduction preserves the cost of optimal solution.
\end{proof}

\begin{theorem}\label{thm:MST-ST}
Let $\alpha$, $0<\alpha<1$, be a constant. The following two statements are equivalent:
\begin{enumerate}[i]
\item There is an $\alpha$-approximation algorithm for the $k$-\ST{} problem.
\item There is an $\alpha$-approximation algorithm for the $k$-\MST{} problem.
\end{enumerate}
\end{theorem}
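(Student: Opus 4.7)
The proof splits into two implications. The direction from $k$-\ST{} to $k$-\MST{} is essentially trivial, and I would dispatch it in a single sentence: $k$-\MST{} is precisely the special case of $k$-\ST{} in which every vertex of $V$ is declared a terminal, so any $\alpha$-approximation algorithm for $k$-\ST{} immediately solves $k$-\MST{} with the same guarantee.

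The substantive direction is $k$-\MST{} implies $k$-\ST{}, for which I would use a cost-preserving reduction analogous to the one already used in the paper for (ii)$\Rightarrow$(iii). Given a $k$-\ST{} instance $\langle G=(V,E), T, k\rangle$ with $|V|=n$, construct $G'$ by attaching, to each terminal $v_t \in T$, a bundle of $n$ fresh dummy vertices via zero-cost edges (the dummies themselves also carrying zero cost). Set $k' := kn + k$, run the assumed $\alpha$-approximation for $k'$-\MST{} on $G'$, and return its output restricted to the vertex set $V$, i.e., with the dummies discarded.

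Correctness reduces to two claims. First, the optimum $k'$-\MST{} cost on $G'$ is at most the optimum $k$-\ST{} cost on $G$: any optimal $k$-\ST{} solution spans at least $k$ terminals, and appending the $n$ attached dummies of each such terminal produces a connected subgraph of $G'$ on at least $k + kn = k'$ vertices of identical cost. Second, every feasible $k'$-\MST{} solution must cover at least $k$ terminals of $T$: a subtree covering only $k-1$ terminals contains at most $n$ original vertices of $G$ together with at most $n$ dummies per included terminal, giving at most $n + (k-1)n = kn < k'$ vertices. Combining these two bounds with the $\alpha$ guarantee of the $k'$-\MST{} algorithm yields the claimed $\alpha$-approximation for $k$-\ST{}.

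The only subtlety I anticipate, and would verify explicitly, is the final projection step: since the dummies appear only as degree-one leaves in $G'$, deleting them from the returned subtree leaves a connected subtree in $G$ that still contains the same set of $\geq k$ terminals and has exactly the same total cost, so it is a genuine Steiner tree feasible for the original $k$-\ST{} instance. The main obstacle is therefore just the vertex-counting inequality above, which is elementary.
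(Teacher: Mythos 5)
Your proposal is correct and uses essentially the same reduction as the paper: attach $n$ zero-cost dummies to each terminal, set $k' = kn+k$, and observe that any subtree with at most $k-1$ terminals has fewer than $k'$ vertices. You additionally spell out the projection step (dropping leaf dummies) and the two directional cost bounds, which the paper leaves implicit, but the argument is the same.
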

\begin{proof}
We note that one way is clear by definition. To prove that (ii) implies (i), similar to Theorem~\ref{thm:MST_rooted_reduction}, we give a cost-preserving reduction from $k$-\ST{} to $k$-\MST{}. Let $<G=(V,E),T,k>$ be an instance of $k$-\ST{} with the set of terminals $T\subseteq V$. Let $n=|V|$. For every terminal $v_t\in T$, add $n$ vertices at distance zero of $v_t$. Let $k'=kn+k$ and consider the solution to $k'$-\MST{} on the new graph. Any subtree with at most $k-1$ terminals have at most $(k-1)n+n-1=kn-1$ vertices. Therefore an optimal solution covers at least $k$ terminals. Hence the reduction preserves the cost of optimal solution.
\end{proof}

All the above proofs work, mutatis mutandis, for the edge-weighted case, too.

\section{Integrality Gap for Budgeted Steiner Tree}\label{sec:natural-lp}
In this section we discuss the linear programming approach to the Budgeted problem. Let $P_v$ denote the set of all paths from root to vertex $v$. We may also assume that all the edges have unit length. Consider the flow-based linear programming below.

\noindent
\begin{align*}
    & \text{maximize.}                  & &\sum_{v\in V} \pi_v \sum_{p\in P_v} \f_p \\
    &  \forall e\in E, v\in V&     & \sum_{p\in P_v : e\in p} \f_p \leq \x_e \tag{X}\label{cons:X}\\
    &  \forall v\in V&             & \sum_{p\in P_v} \f_p \leq 1 \tag{F}\label{cons:F}\\
    &   &   &   \sum_{e\in E} \x_e \leq B \tag{B}\label{cons:B}\\
    &   &   &   \f_p, \x_e \geq 0
\end{align*}
Intuitively, for a path $p$ ending at $v$, $\f_p$ denote the total flow reaching $v$ through $p$ and $\x_e$ denote the maximum flow passing through the edge $e$. Constraint \ref{cons:B} keeps the cost of edges in budget and constraint \ref{cons:F} restricts the total flow reaching a vertex. One can also write a similar cut-based linear programming. However, we can show that even if $G$ is a tree, the gap between the fractional and integral solutions is unbounded. Let $G$ be a tree obtained by putting a star at the end of a long path of length $B-1$ (see Fig.~\ref{fig:gap}). Let $u_1,\ldots, u_k$ denote the leaves other than the root which have $1$ unit of profit. Other vertices have zero profit. Clearly the optimal integral solution gains one unit of profit. Let $p_i$ denote the path from $r$ to $u_i$. Consider a feasible fractional solution where for every $i$, $f_{p_i}=\frac{B}{B+k-1}$ and therefore for every edge $e$, $x_e=\frac{B}{B+k-1}$. We note that since there are $B+k-1$ edges, we are not exceeding the budget. This shows that the optimal fractional solution is at least $\frac{kB}{B+k-1}$ and hence in case of $B\geq k$, the gap between the fractional and the integral solution is $k$.

\begin{figure}
    \begin{center}
        \includegraphics[width=0.9\textwidth]{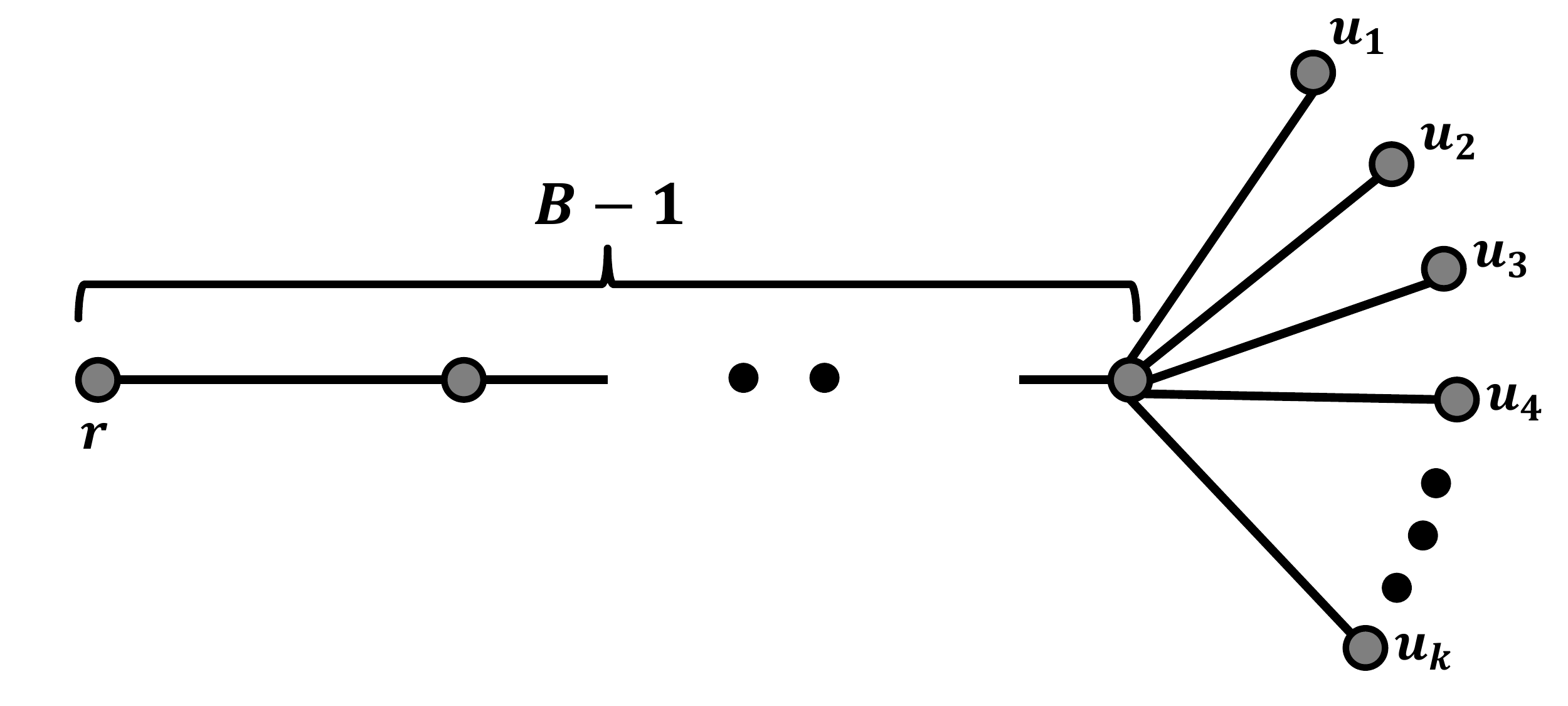}
    \end{center}
  \caption{An example showing the unbounded gap of the LP for the budget problem.}
        \label{fig:gap}
\end{figure}

\section{Omitted Proofs}\label{sec:omitted}
\begin{proof}[Proof of Lemma~\ref{lem:maxR}]
Let $\y_1,\ldots,\y_{|T'|}$ denote the disks of radius $R$ centered at the terminals in $T'$. Increasing the radius of all disks by any $\epsilon>0$ creates an infeasibility in their union $\y$. Thus at least one of the followings holds for $\y$:
\begin{itemize}
\item \emph{The constraint~\ref{eq:prizeconst} is tight for a set $S$ containing a terminal, i.e., $\sum_{S'\subseteq S} \y(S')=\pi(S)>0$.} Let $S$ be such a set with the smallest cardinality. Recall that by Fact~\ref{fact:outside}, $\y(S')$ is positive for a set $S'$ only if $S'$ contains the center of a disk and is a subset of the continent of that disk. We remove the zero terms from both sides of the equality. The right hand side would be the penalty of a subset of the terminals and the left hand side would be the sum over dual variables $\y(S')$'s such that $S'$ is a subset of a continent. If the inequality is tight it has to be tight induced to any disk, otherwise $\y$ is not feasible. Thus the smallest set $S$ is indeed a subset of the continent of a disk. Now let $S^*$ be the continent of that disk. The sets $S$ and $S^*$ share the same core, thus the right hand sides of the constraint~\ref{eq:prizeconst} for both are the same.
    However the the left hand side of the constraint for $S^*$ can only be larger which leads to (i).
\item \emph{For a vertex $v$ the constraint~\ref{eq:dualconst} becomes infeasible if we grow every disk by any $\epsilon>0$.} The constraint for $v$ is tight w.r.t. $\y$. If the constraint for $v$ is not tight in any of $\y_i$'s independently, then $v$ is on the boundary of more than one disk which leads to (ii). Otherwise assume the constraint for $v$ is tight in $\y_i$ for an $i\in[|T'|]$\footnote{For an integer $x$, let $[x]$ denote the set $\{1,2,\ldots, x\}$.}. If we extend the radius by $\epsilon$, $v$ will be inside the $i^{th}$ disk thus $\sum_{S|v\in \delta(S)} \y_1(S)$ will not change. However by the assumption about $v$, the same summation for $\y$, i.e., $\sum_{S|v\in \delta(S)} \y(S)$ will increase. Therefore a neighbor of $v$ is at most $R$ far from the center of another disk, say that of the $j^{th}$ disk. By definition, $v$ is on the boundary of the $j^{th}$ disk. Further, by Proposition~\ref{prop:inorout}, $v$ cannot be inside the $i^{th}$ disk and so is on its boundary which leads to (ii). \qedhere
\end{itemize}
\end{proof}

\begin{proof}[Proof of Lemma~\ref{lem:boundarydistance}]
Consider the process of growing the disk during which we start with a set $S$ (initialized to the core containing $t$) and add the vertices for which the constraint~\ref{eq:dualconst} becomes tight. Using induction it is easy to show that a vertex $u$ in the continent is added to $S$ when the total growth passes $d^c(t,u)$.
Let $u$ be the closest neighbor of $v$ to the center, thus $R-(d^c(t,v)-c(v))=R-d^c(t,u)$. If $u$ is not inside the disk, then $d^c(t,u)=R$ and we are done. Otherwise, as soon as the radius of the disk passes $d^c(t,u)$, $S$ \emph{touches} the vertex $v$ and any further growth contributes to $\sum_{S|v\in\delta(S)} \y(S)$. Since at the end $v$ is not inside the disk, this contribution continues until the total growth reaches $R$. Therefore
\begin{align*}
\sum_{S|v\in\delta(S)} \y(S)=R-d^c(t,u)=R-(d^c(t,v)-c(v)) &\qedhere
\end{align*}
\end{proof}

\begin{theorem}[Theorem 12 of \cite{MossRabani}]\label{thm:mossbudgeted}
For an instance of the rooted budgeted problem, an $O(\log n)$-approximation solution can be found in polynomial time which uses at most twice the budget.
\end{theorem}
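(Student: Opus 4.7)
The plan is to reduce the rooted Budgeted problem to the rooted prize‐collecting Steiner tree problem by Lagrangian relaxation, using any $\alpha$-approximation for rooted PCST as a black box with $\alpha=O(\log n)$. Such a black box follows, for instance, from Theorem~\ref{thm:gwalg} applied to the special case in which every demand has the root $r$ as one endpoint (and $r$ is forced into the solution by giving it infinite penalty). For each multiplier $\lambda\ge 0$, I form the PCST instance on $G$ with costs $c$, penalties $\lambda\pi$, and $r$ mandatory, and let $T_\lambda$ be the tree returned by the black box. Testing the PCST guarantee against the optimal budgeted tree $T^*$ (cost $\le B$, prize $\OPT_B$) and rearranging puts the inequality in Lagrangian form
\[
    c(T_\lambda) + \alpha\lambda\bigl(\OPT_B - \pi(T_\lambda)\bigr) \;\le\; \alpha B + \mathrm{err}(\lambda),
\]
where $\mathrm{err}(\lambda)=(\alpha-1)\lambda\bigl(\pi(V)-\OPT_B\bigr)$ is a slack arising from the fact that Theorem~\ref{thm:gwalg} is not a priori Lagrangian-preserving; this term will be eliminated by a differencing argument below.

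Next I binary-search on $\lambda$ over a polynomially bounded discretization of $[0,\Lambda_{\max}]$ where $\Lambda_{\max}$ is large enough that $T_{\Lambda_{\max}}$ spans all reachable vertices. Since the black box is deterministic, $\lambda\mapsto T_\lambda$ is piecewise constant and the search terminates either with a single value for which $c(T_\lambda)\in[B,2B]$---in which case I output $T_\lambda$ directly---or with two adjacent values $\lambda_1<\lambda_2$ satisfying $c(T_{\lambda_1})\le B<c(T_{\lambda_2})$. In the latter case choose $\mu\in(0,1)$ with $\mu\,c(T_{\lambda_1})+(1-\mu)\,c(T_{\lambda_2})=B$. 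Forming the convex combination of the two Lagrangian inequalities at $\lambda_1$ and $\lambda_2$ cancels the $\mathrm{err}(\lambda)$ contributions up to a vanishing $|\lambda_2-\lambda_1|\cdot\pi(V)$ term (which the binary-search precision drives below any desired threshold), yielding
\[
    \mu\,\pi(T_{\lambda_1})+(1-\mu)\,\pi(T_{\lambda_2}) \;\ge\; \OPT_B/\alpha.
\]
Because a convex combination never exceeds the maximum summand, $\max\{\pi(T_{\lambda_1}),\pi(T_{\lambda_2})\}\ge \OPT_B/\alpha$.

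It remains to produce a single tree of cost at most $2B$ with this prize. If the larger-prize tree is $T_{\lambda_1}$, return it: its cost is $\le B\le 2B$. Otherwise the good tree is $T_{\lambda_2}$; if $c(T_{\lambda_2})\le 2B$ return it directly. If $c(T_{\lambda_2})>2B$, I would first discard vertices at distance more than $B$ from $r$ (none can lie in any feasible budgeted solution), then apply a rooted trimming step in the spirit of Lemma~\ref{lem:trimmingeps} (with $\epsilon=1$) to extract a subtree of $T_{\lambda_2}$ rooted at $r$ of cost in $[B/2,2B]$ whose prize-to-cost ratio is at least a constant fraction of that of $T_{\lambda_2}$; this subtree has cost at most $2B$ and prize $\Omega(\pi(T_{\lambda_2}))=\Omega(\OPT_B/\log n)$, as required.

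The hard part will be the Lagrangian combination step, since $\mathrm{err}(\lambda)$ can be as large as $\Theta(\lambda\pi(V))$ for a single $\lambda$ and is only tamed by differencing two nearly identical $\lambda$ values. Making this quantitative---i.e., showing that the binary-search granularity can be chosen polynomial in the input bit-complexity so that the residual slack is absorbed into the $O(\log n)$ factor---is the one nontrivial calculation. The trimming of $T_{\lambda_2}$ in the bad case is where the factor-two violation on the budget is essentially consumed; avoiding it entirely would require a Lagrangian-preserving PCST black box, which is not available from Theorem~\ref{thm:gwalg}.
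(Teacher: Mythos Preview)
The paper does not prove this theorem; it is quoted verbatim from Moss and Rabani~\cite{MossRabani} and invoked only as a black box (in the proofs of Theorems~\ref{thm:rootedeps} and~\ref{thm:logunrooted}). Section~\ref{sec:mossrabani} sketches their actual method: solve the natural LP relaxation, show that any fractional solution can be approximated by a convex combination of integral rooted trees whose support is produced via repeated calls to an $O(\log n)$-approximate PCST oracle, and then select one tree from the support by averaging. That is an LP-decomposition argument, not a Lagrangian one.

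Your Lagrangian route has a genuine gap precisely at the step you flag as ``the hard part.'' The slack $\mathrm{err}(\lambda)=(\alpha-1)\lambda\bigl(\pi(V)-\OPT_B\bigr)$ is linear in $\lambda$ with a fixed nonnegative coefficient, so in the convex combination
\[
\mu\,\mathrm{err}(\lambda_1)+(1-\mu)\,\mathrm{err}(\lambda_2)
=(\alpha-1)\bigl(\pi(V)-\OPT_B\bigr)\bigl(\mu\lambda_1+(1-\mu)\lambda_2\bigr)
\]
the two contributions \emph{add}, not cancel; the residual scales with the weighted \emph{average} of $\lambda_1,\lambda_2$, not with $|\lambda_2-\lambda_1|$, and no refinement of the binary search shrinks it. This is exactly why the standard Lagrangian reduction from budgeted/quota problems to prize-collecting requires a Lagrangian-multiplier-preserving guarantee of the form $c(T)+\alpha\cdot\text{penalty}\le\alpha\cdot\OPT$, which Theorem~\ref{thm:gwalg} does not supply---a point the paper itself makes when contrasting its algorithm with that of K\"onemann et al.~\cite{Sina13}. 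Without an LMP black box (or the LP-decomposition machinery of~\cite{MossRabani}), the argument as written does not go through.
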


\begin{proof}[Proof of Lemma~\ref{lem:trimunrooted}]
We make $T$ rooted at an arbitrary vertex $r$.
As the first pruning step, we repeatedly discard a subtree if the ratio and the cost of the remaining tree does not go below $\gamma$ and $\frac{B}{4}$, respectively. We stop when no such subtree can be found. Suppose the current cost of $T$ is more than $B$; otherwise we are done. As in Lemma~\ref{lem:trimmingeps}, a subtree $T'$ is \emph{rich} if the ratio of $T'$ and all subtrees of $T'$ is at least $\gamma$. Note that one can easily check whether a subtree is rich.

First we show given a rich subtree we can easily find the solution. Observe that all the subtrees of a rich subtree are also rich unless their cost is less than $\frac{B}{4}$. Given a rich subtree, let $T'$ be its lowest rich subtree, i.e., the cost of any immediate subtree of $T'$ (if any exist) is less than $\frac{B}{4}$. Now let $C$ denote the total cost of all immediate subtrees of $T'$.
\begin{itemize}
\item If $C<\frac{B}{4}$ (or no child exists), then $c(T')\leq \frac{3B}{4}$ since the cost of the root of $T'$ does not exceed $\frac{B}{2}$. Thus $T'$ satisfies the properties desired in the lemma. Recall that $T'$ is rich and thus its ratio is at least $\gamma$.
\item If $C\geq\frac{B}{4}$, we can pick a subset of immediate subtrees of $T'$ such that their total cost is between $\frac{B}{4}$ and $\frac{B}{2}$. This can be done since the cost of an immediate subtree is at most $\frac{B}{4}$. Let $T^*$ be the tree formed by connecting these subtrees to the root of $T'$. Observe that $c(T^*)\leq B$ and the total prize is at least $\pi(T^*)\geq \gamma \frac{B}{4}$. Therefore the ratio of $T^*$ is at least $\frac{\gamma}{4}$.
\end{itemize}
It only remains to consider the case that $T$ does not have a rich subtree. Since $T$ is not rich, a subtree of $T$ has ratio less than $\gamma$.
Let $T'$ be a subtree with ratio less than $\gamma$ such that all strict subtrees of $T'$ (if any exists) have ratio at least $\gamma$. Observe that the cost of an immediate subtree of $T'$ is less than $\frac{B}{4}$, otherwise it would be a rich subtree. On the other hand, we have not discarded $T'$ in the first pruning step, hence $c(T\backslash T')<\frac{B}{4}$. Furthermore $c(T)>B$, thus the total cost of immediate subtrees of $T'$ is at least $\frac{B}{4}$.
Now similar to the previous argument, we can pick a subset of immediate subtrees of $T'$ such that their total cost is between $\frac{B}{4}$ and $\frac{B}{2}$. The tree formed by connecting these subtrees to the root of $T'$ has the desired properties.
\end{proof}

}

\end{document}